\documentclass[aps,pra,superscriptaddress,10pt,floatfix]{revtex4-2} 

\usepackage{amsmath}
\usepackage{amssymb}
\usepackage{amsthm}
\usepackage{amsfonts}
\usepackage{bbm}
\usepackage{xcolor}
\usepackage{graphicx} 
\usepackage[utf8]{inputenc}
\usepackage[T1]{fontenc}
\usepackage{physics}
\usepackage{times}
\usepackage{cancel}
\usepackage{wasysym}
\usepackage{comment}
\usepackage{ulem}
\usepackage{mathtools}
\usepackage{lipsum}
\usepackage{placeins}

\usepackage[caption=false]{subfig}

\usepackage{hyperref}

\hypersetup{
	colorlinks=true,linkcolor=blue,citecolor=blue,
	filecolor=blue,urlcolor=blue,breaklinks=true
}

\theoremstyle{remark}
\newtheorem*{remark}{Remark}
\theoremstyle{plain} 
\newtheorem{theorem}{Theorem}[section]
\newtheorem{lemma}[theorem]{Lemma}
\newtheorem{corollary}[theorem]{Corollary}

\newcommand{\cH}{\mathcal{H}}
\newcommand{\B}{\mathcal{B}}

\hypersetup{
	colorlinks=true,linkcolor=blue,citecolor=blue,
	filecolor=blue,urlcolor=blue,breaklinks=true
}

\newcommand{\orcid}[1]{\href{https://orcid.org/#1}{\includegraphics[width=7pt]{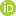}}}

\begin{document}
	
\title{Effective Transition from Weak to Essential Non-Markovianity Induced by Coarse-Graining}

\author{Gabriel M. Arantes\orcid{0009-0004-7382-2635}}
\email{gabriel.moniz@usp.br}
\affiliation{
	Department of Mathematical Physics, Institute of Physics, University of São Paulo, Rua do Matão 1371, São Paulo 05508-090, São Paulo, Brazil
}
\author{Barbara Amaral \orcid{0000-0003-1187-3643}}
\email{barbara_amaral@usp.br}
\affiliation{
	Department of Mathematical Physics, Institute of Physics, University of São Paulo, Rua do Matão 1371, São Paulo 05508-090, São Paulo, Brazil
}

\author{Nadja K. Bernardes \orcid{0000-0001-6307-411X}}
\email{nadja.bernardes@ufpe.br}
\affiliation{
	Departamento de F\'isica, Centro de Ci\^encias Exatas e da Natureza, Universidade Federal de Pernambuco, Recife 50670-901, Brazil
}

\date{\today}
	
\begin{abstract}
Quantum channels generally reduce the distinguishability of quantum states, thereby constraining information transmission and processing in open quantum systems. While it is known that distinguishability can be partially recovered through suitable post-processing protocols, a systematic characterization of the maximal achievable gain has remained elusive. Here, we establish a general framework to determine and optimize the recovery of distinguishability induced by a quantum channel. We introduce an algorithm that identifies the optimal implementation of a multi-copy coarse-graining protocol applicable to arbitrary channels. Within this framework, we derive a general upper bound on the attainable distinguishability gain and quantify the performance of the protocol through its tightness relative to this bound. Our results show that collective coarse-graining can lead to effective dynamics exhibiting the operational signatures of essential non-Markovianity even when the underlying microscopic dynamics remain weakly non-Markovian. A detailed analysis reveals a nontrivial trade-off between mathematical optimality (bound saturation) and operational optimality, together with a strong dependence on both the input ensemble and the number of copies. Taken together, these findings provide a unified and quantitative framework to assess, optimize, and interpret distinguishability recovery in open quantum systems while showing how collective processing can qualitatively modify the operational memory properties of effective dynamical descriptions. 
\end{abstract}

\maketitle

\section{Introduction}
\label{introduction}

The dynamics of open quantum systems are fundamentally shaped by the exchange of information between a system and its environment \cite{Breuer2009, Rivas2014}. A central distinction in this context is between Markovian and non-Markovian processes \cite{Rivas2014}. Markovian dynamics are commonly associated with a monotonic, though not necessarily strict, loss of system-accessible information, typically formalized through completely positive (CP) divisibility \cite{Rivas2010}. In contrast, non-Markovian processes violate this property and may exhibit memory effects, most notably in the form of information backflow, whereby previously lost information temporarily returns to the system \cite{Breuer2009, Rivas2014, ConstructiveBackflow}.

Beyond this binary classification, non-Markovianity admits a finer operational hierarchy. In particular, one can distinguish between \emph{weakly} and \emph{essentially} (or strong) non-Markovian dynamics, depending on the positivity properties of the intermediate dynamical maps~\cite{weak-strong}. While both regimes violate CP-divisibility \cite{Rivas2010}, only essentially non-Markovian processes exhibit a recovery of distinguishability between states of the system alone, typically witnessed by a temporary increase in the trace distance between quantum states \cite{Breuer2009, weak-strong}. Weakly non-Markovian dynamics, by contrast, remain operationally indistinguishable from Markovian ones under the standard distinguishability measure, as they require an extension of the channel by an ancilla space to observable recovery of information \cite{ConstructiveBackflow}.

This distinction has direct implications for quantum information processing, as information backflow is a resource in tasks such as quantum teleportation~\cite{Teleportation}, error correction~\cite{errorcorrection}, entanglement preservation~\cite{Entanglement-preservation}, and thermodynamics~\cite{NM-thermodynamics}. Consequently, only essentially non-Markovian dynamics carry the operational signatures required to power these tasks. This naturally raises the question of whether collective processing can generate such signatures from a weakly non-Markovian process without altering the underlying microscopic dynamics.

The possibility of enhancing or altering non-Markovian memory effects through system interactions and collective operations has recently attracted significant attention in open quantum systems. Multi-copy processing and coarse-graining operations have been shown to manipulate quantum information flows across parallel implementations of dynamical maps. In particular, Azevedo et al. \cite{AZEVEDO} introduced a multi-copy coarse-graining protocol analogous to concepts in quantum resource theories \cite{Nonmarkovianitasresource}, capable of reshaping the effective memory properties of quantum channels. Similar phenomena involving collective processing, superactivation, and emergent dynamics have been widely investigated in quantum information science, where access to higher tensor powers ($n \ge 2$) unlocks operational capacities that remain strictly forbidden at the single-copy level \cite{Bennett1996, Horodecki1999, Smith2008}. In addition to post-processing protocols, physical mechanisms such as environment-induced indirect interactions among multi-component systems have also been shown to amplify non-Markovian memory signatures by orders of magnitude \cite{Zaman2026}. However, previous multi-copy coarse-graining schemes revealed a fundamental bottleneck: while effective in the strongly non-Markovian regime, they failed to induce observable information backflow when applied to weakly non-Markovian channels \cite{AZEVEDO}. Whether this failure represents an intrinsic physical limitation or merely a consequence of suboptimal coarse-graining maps has  remained an open question.

In this work, we answer this question in the affirmative by optimizing the multi-copy protocol for quantum processes presented in reference \cite{AZEVEDO}. The protocol operates by coarse-graining  multiple independent instances of a given dynamical map, thereby effectively reshaping its information flow properties. We show that this procedure can induce an effective transition from weak to essential non-Markovianity, as witnessed by the emergence of a positive increase in distinguishability. 

To quantify the performance of this transformation, we derive a general upper bound on the achievable increase in distinguishability and introduce an associated tightness measure that captures how closely the protocol approaches this limit. This framework allows us to systematically analyze the interplay between bound saturation and operational gain, revealing a nontrivial trade-off: scenarios that saturate the bound do not necessarily correspond to those with maximal performance.

Furthermore, we demonstrate that the effectiveness of the protocol depends sensitively on both the structure of the input state ensemble and the number of copies employed. While increasing the number of copies generally enhances the distinguishability change, we identify regimes in which this improvement is strongly suppressed. Our analysis also uncovers parameter regions where the protocol achieves the emergence of operational signatures despite extremely small initial distinguishability, highlighting both the power and the limitations of collective coarse-graining.

\section{The Model and Non-Markovianity Regimes}

We consider a finite-dimensional quantum system with Hilbert space $\mathcal H = \mathbb{C}^d$. States are represented by density operators $\rho \in D(\mathcal H)$, and their dynamics is described by a family of completely positive and trace-preserving (CPTP) maps $\Lambda = \{\Lambda_t\}_{t \geq 0}$ with $\Lambda_0 = \mathbb{I}$.

A central structural property of open quantum dynamics is Markovianity, commonly defined in terms of CP-divisibility \cite{Rivas2010}. A quantum evolution is said to be Markovian if, for all $t \geq s \geq 0$, it can be decomposed as
\begin{equation}
\Lambda_t = V_{t,s} \circ \Lambda_s,
\end{equation}
where the intermediate maps $V_{t,s}$ are CPTP. Any deviation from this condition signals non-Markovian behavior.

Non-Markovian dynamics, however, admit a finer hierarchy based on the positivity properties of the intermediate maps~\cite{weak-strong}. In particular, one distinguishes between \emph{weakly} and \emph{essentially} non-Markovian evolutions. Weak non-Markovianity corresponds to the case in which $V_{t,s}$ remains positive (P-divisible) but not completely positive, whereas essential non-Markovianity arises when positivity itself is violated. This distinction defines a sharp operational boundary: only essentially non-Markovian processes can exhibit a genuine recovery of information accessible to the system \cite{weak-strong}.

This operational difference is naturally captured by the behavior of the trace distance between quantum states. The trace distance provides a direct, operational quantification of the physical distinguishability between two quantum states. For $\rho_1,\rho_2 \in D(\mathcal H)$, the trace distance is defined as
\[
D(\rho_1,\rho_2) := \tfrac{1}{2}\|\rho_1 - \rho_2\|_1.
\] 
The trace distance directly dictates the maximum success probability achievable when attempting to distinguish between the states in a single measurement, given by $$P_{\text{succ}} = \frac{1}{2}\left(1 + D(\rho_1, \rho_2)\right)$$ via the Helstrom bound \cite{Helstrom1969, Helstrom1976, Barnett2009}. When $D(\rho_1, \rho_2) = 1$, the states have orthogonal supports and are perfectly distinguishable; conversely, when $D(\rho_1, \rho_2) = 0$, the states are identical and operationally indistinguishable. Furthermore, because any completely positive and trace-preserving (CPTP) quantum channel $\Lambda$ is contractive under the trace norm—satisfying $$D(\Lambda(\rho_1), \Lambda(\rho_2)) \le D(\rho_1, \rho_2).$$ The trace distance formalizes the principle that open-system dynamics and quantum processing operations cannot increase the distinguishability of states without non-Markovian memory effects or collective multi-copy processing.

In a discrete-time setting, considering two successive instants, we define the change in distinguishability as
\begin{equation}
\Delta D(\rho_1,\rho_2)
=
\frac{1}{2}\big(
\|\Lambda_2(\rho_2 - \rho_1)\|_1
-
\|\Lambda_1(\rho_2 - \rho_1)\|_1
\big).
\label{eq:deltaD}
\end{equation}
If the intermediate map $\Lambda_{2,1} = \Lambda_2 \circ \Lambda_1^{-1}$ is positive, contractivity of the trace norm implies $\Delta D \leq 0$ for all input states (see Appendix~\ref{app: distinguishability as witness of the Regime}). This establishes that the change of distinguishability between states in $D(\mathcal{H})$ can only witness a violation of P-divisibility, whereas to witness a violation of CP-divisibility it is necessary to consider the corresponding extension of the dynamical map to the system with an ancilla \cite{ConstructiveBackflow}. 

A positive value of $\Delta D$ therefore provides an operational witness of essential non-Markovianity, signaling information backflow from the environment to the system. In contrast, weakly non-Markovian dynamics remain indistinguishable from Markovian ones under this criterion, as they do not generate any observable recovery of distinguishability.

This limitation motivates the introduction of dynamical transformations capable of modifying the effective information flow of a process. In particular, we consider transformations acting on quantum evolutions through multi-copy processing and coarse graining. Given $n$ independent realizations of a channel $\Lambda_t$, we define the effective dynamics
\begin{equation}
\Lambda'_t = \lambda \circ \big(\Lambda_t^{\otimes n}\big),
\label{effective channel}
\end{equation}
where $\lambda$ is a CPTP map that compresses the $n$-copy output back to the original system dimension. As discussed in the previous section, this construction preserves complete positivity but can fundamentally alter the non-Markovian character of the dynamics by reshaping correlations across copies.

The possibility of enhancing non-Markovianity through such transformations was first explored in \cite{AZEVEDO}, where the process was described as a form of non-Markovianity distillation. Nevertheless, within the resource-theoretic framework of \cite{Nonmarkovianitasresource}, such an interpretation is not immediate. There, Markovian processes define the free operations, while non-Markovian dynamics are regarded as resourceful. Consequently, the observed enhancement of non-Markovianity is more naturally interpreted as a resource-generating transformation unless an alternative set of free operations is specified.

Despite this, the protocol reveals interesting physical features. In particular, the coarse-graining procedure can correlate different copies of the dynamics in such a way that the distinguishability of quantum states increases over time. As a result, an observer having access only to a memoryless weakly non-Markovian channel may engineer effective dynamics exhibiting a genuine backflow of information. From this perspective, the protocol is most naturally interpreted as a mechanism for generating effective memory through collective processing. Indeed, the resulting non-Markovianity generally originates from the coarse-graining transformation itself rather than from the concentration or amplification of memory effects already present in the original dynamics.

However, an important limitation was identified in \cite{AZEVEDO}: while the protocol is effective in the essentially non-Markovian regime, it fails to induce any observable backflow in the weak regime. Since weakly non-Markovian processes satisfy $\Delta D \leq 0$ for all input states, it remains unclear whether this limitation is fundamental or merely a consequence of suboptimal processing.

In this work, we address this question by systematically optimizing over all admissible coarse-graining maps $\lambda$, as detailed in the protocol section. Our goal is to determine whether multi-copy processing can activate essential non-Markovianity, i.e., induce $\Delta D'_n > 0$, even when the original dynamics is confined to the weak regime. In addition, we quantify the maximal achievable enhancement and analyze its dependence on the number of copies and the structure of the input states.

To explicitly investigate these questions, we consider a discrete-time model defined by two successive quantum channels,
\begin{equation}
\begin{split}
    \Lambda_1(\rho) &=(1-2\varepsilon)\rho + \varepsilon(Z\rho Z + X\rho X), \\
    \Lambda_2(\rho) &= \big[(1-2\varepsilon)^2 + 4\varepsilon^2\big]\rho 
    + 2\varepsilon(1-2\varepsilon)(Z\rho Z + X \rho X),
\end{split}
\label{eq:model}
\end{equation}
previously analyzed in Ref.~\cite{Weak-Model}. For this model, the intermediate map $\Lambda_{2,1}$ is positive but not completely positive in the parameter range $0 < \varepsilon < 0.25$, corresponding to the weakly non-Markovian regime. This makes it an ideal testbed to investigate whether multi-copy effective dynamics can induce a effective transition to essential non-Markovianity.

As will be shown, the interplay between the structure of the input ensemble, the number of copies, and the fundamental bounds governing distinguishability leads to a rich phenomenology, including multi-copy activation thresholds and nontrivial trade-offs between performance and bound saturation.

\section{Definitions and Theorems}
\label{sec:preliminaries}

The goal of the protocol is to enhance the observable distinguishability between different stages of the dynamics through an optimized effective description. Operationally, this requires transforming the operators associated with the evolution in such a way that their trace norms become as separated as possible after coarse-graining.

However, two independent mechanisms constrain this enhancement. First, the trace norm obeys a reverse-triangle inequality, which limits how much the norms of two operators can differ relative to their mutual distance. Second, CPTP maps cannot increase distinguishability, imposing a fundamental contractivity constraint on the admissible transformations.

For any Hermitian operators $A,B \in \mathcal{B}(\cH_m)$, these constraints combine into the chain of inequalities
\begin{align}
    \big|\ \|\lambda(A)\|_1 - \|\lambda(B)\|_1 \ \big| 
    &\leq  \|\lambda(A-B)\|_1 \label{eq:ineq1}\\
    &\leq \|A-B\|_1, \label{eq:ineq2}
\end{align}
where Eq.~\eqref{eq:ineq1} follows from the triangle inequality and Eq.~\eqref{eq:ineq2} from the contractivity of the trace norm under CPTP maps. This relation establishes a fundamental upper bound on the distinguishability gain achievable through any coarse-graining procedure.

Saturating the bound, therefore, requires simultaneously overcoming two distinct obstructions established by the following lemmas:
\begin{lemma}
Let $A, B \in \mathcal{B}(H_m)$ be traceless Hermitian matrices. Let $\Delta = A - B$ and $P_+$ be the projector onto the strictly positive eigenspace of $\Delta$.
\[
 \|A\|_1 - \|B\|_1 = \|A-B\|_1
\]
if and only if $A$ and $B$ commute and, for every common eigenvector $|v\rangle$ with 
\[
A|v\rangle = v_a |v\rangle, \qquad B|v\rangle = v_b |v\rangle,
\]
the corresponding eigenvalues satisfy $\text{Tr}(P_+B) \geq 0$.
\label{cond 1}
\end{lemma}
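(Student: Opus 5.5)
The plan is to treat the identity as an equality case of the triangle inequality. Write $A=B+\Delta$ with $\Delta=A-B$. The claim $\|A\|_1-\|B\|_1=\|A-B\|_1$ is then equivalent to
\[
\|B+\Delta\|_1=\|B\|_1+\|\Delta\|_1 .
\]
The tool I would use is the variational (duality) formula for the trace norm of a Hermitian operator,
\[
\|X\|_1=\max_{\|S\|_\infty\le 1,\ S=S^\dagger}\operatorname{Tr}(SX),
\]
where the maximum is attained at $S=\operatorname{sign}(X)$. On $\ker X$ the choice of $S$ is free, subject to the norm bound.

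\emph{Necessity.} Set $S=\operatorname{sign}(A)=P^A_+-P^A_-$. Then
\[
\|A\|_1=\operatorname{Tr}(SB)+\operatorname{Tr}(S\Delta)\le\|B\|_1+\|\Delta\|_1 .
\]
Equality therefore forces both $\operatorname{Tr}(SB)=\|B\|_1$ and $\operatorname{Tr}(S\Delta)=\|\Delta\|_1$. A Hermitian contraction achieves $\operatorname{Tr}(SX)=\|X\|_1$ only if it acts as $+1$ on $\operatorname{supp}X_+$ and as $-1$ on $\operatorname{supp}X_-$. This gives $\operatorname{supp}B_+,\ \operatorname{supp}\Delta_+\subseteq \operatorname{ran}P^A_+\oplus\ker A$, and the analogous inclusions for the negative parts. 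Applying the same argument with $\operatorname{sign}(B)$ and $\operatorname{sign}(\Delta)$ in the role of $S$ yields the symmetric statement. The conclusion is that $B$ and $\Delta$ admit a \emph{common} sign operator $S$: the positive parts of $B$ and $\Delta$ are orthogonal to the negative parts of both. In the language of the lemma, on the range of $P_+$ the operator $B$ cannot have weight of the opposite sign, and compressing to that range gives $\operatorname{Tr}(P_+B)\ge 0$. For a common eigenvector $|v\rangle$ this is exactly the statement that $v_b$ and $v_a-v_b$ do not have opposite signs.

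\emph{Sufficiency.} Assume $A$ and $B$ commute and, for each common eigenvector, $v_b$ and $v_a-v_b$ share a sign (which is what the trace condition expresses eigenvector-wise). Then in the common eigenbasis
\[
\|A\|_1=\sum|v_a|=\sum\bigl(|v_b|+|v_a-v_b|\bigr)=\|B\|_1+\|\Delta\|_1 ,
\]
since $|x+y|=|x|+|y|$ whenever $x$ and $y$ do not have opposite signs. This direction is routine. Tracelessness plays no role in it; it serves only to guarantee that both $P_\pm$ are nontrivial whenever $\Delta\neq0$.

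The main obstacle is the commutation claim in the necessity direction. The duality argument gives only that $B$ and $\Delta$ are \emph{sign-compatible}, i.e.\ they share a sign operator $S$ and $SB,\ S\Delta\ge 0$. It does not by itself give $[A,B]=0$. Indeed, $B$ and $\Delta$ could both be positive on a common subspace without commuting. I would therefore first try to extract commutation from the combined conditions on $S=\operatorname{sign}(A)=\operatorname{sign}(B)=\operatorname{sign}(\Delta)$ restricted to the supports. If that fails, I would prove the lemma in the form that is actually used later, namely by reducing to the block-diagonal structure relative to $S$, which suffices for saturation of Eq.~\eqref{eq:ineq1}. In that case I would state the commuting case as the sufficient condition employed by the optimization algorithm. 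I expect a possible counterexample in $m\ge 3$, with both operators positive on a common two-dimensional block, to be the decisive check of whether the commutation requirement can be kept as stated.
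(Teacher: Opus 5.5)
\paragraph{The statement is false for $m\ge 3$.}
Your reduction to $\|B+\Delta\|_1=\|B\|_1+\|\Delta\|_1$ and your use of trace-norm duality follow the paper's route: it sets $B=W$, $A=R+W$ and invokes its triangle-equality theorem. Your necessity argument correctly shows that $\operatorname{sign}(A)$ is a common sign operator for $B$ and $\Delta$, whence $P_+BP_+\ge 0$ and $\operatorname{Tr}(P_+B)\ge 0$. However, the statement fails in both directions for $m\ge 3$, so your plan cannot be completed.

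Your suspicion about commutation is correct. Take $B=\mathrm{diag}(1,0,-1)$ and $\Delta=|+\rangle\langle+|-|2\rangle\langle 2|$ with $|+\rangle=(|0\rangle+|1\rangle)/\sqrt2$. Both are traceless. $A=B+\Delta$ consists of the positive semidefinite block $|0\rangle\langle0|+|+\rangle\langle+|$ on $\mathrm{span}\{|0\rangle,|1\rangle\}$, which has trace $2$, together with the eigenvalue $-2$ on $|2\rangle$. Hence $\|A\|_1-\|B\|_1=4-2=2=\|\Delta\|_1$ and $\operatorname{Tr}(P_+B)=\tfrac12\ge 0$, yet $[A,B]=[\Delta,B]\neq 0$.

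The paper gets commutation from the claim, in the proof of its triangle-equality theorem, that $A$ and $B$ ``share the $W$-eigenbasis'' on $\mathcal S$. This fails because the contraction $W$ equals $\pm1$ on degenerate subspaces, so its eigenbasis there is arbitrary. Two non-commuting positive operators already saturate the triangle inequality with $W=I$. Commutation must therefore be dropped from the necessary conditions, not proved.

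\paragraph{Gap in your sufficiency direction.}
You replace the hypothesis $\operatorname{Tr}(P_+B)\ge 0$ by the eigenvector-wise condition $(v_a-v_b)v_b\ge 0$, asserting that the former ``expresses'' the latter. It does not. The pointwise condition implies the trace condition, but not conversely, even for commuting operators. A single aggregate number cannot control individual signs, and tracelessness only links aggregates. For instance, $A=\mathrm{diag}(3,0,-3)$ and $B=\mathrm{diag}(2,-1,-1)$ commute, give $\Delta=\mathrm{diag}(1,1,-2)$ and $\operatorname{Tr}(P_+B)=1\ge 0$, but $\|A\|_1-\|B\|_1=2\neq 4=\|\Delta\|_1$. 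The paper makes the mirror-image slip when it says the pointwise condition ``can be expressed as'' $\operatorname{Tr}(P_+B)\ge 0$.

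\paragraph{What your argument does prove.}
Your duality argument establishes, in both directions, the correct statement: equality holds if and only if $\operatorname{supp}B_+\perp\operatorname{supp}\Delta_-$ and $\operatorname{supp}B_-\perp\operatorname{supp}\Delta_+$. For sufficiency, let $\Pi_\pm$ be the projector onto $\operatorname{supp}B_\pm+\operatorname{supp}\Delta_\pm$ and set $S=\Pi_+-\Pi_-$. Then $\|A\|_1\ge\operatorname{Tr}(SA)=\|B\|_1+\|\Delta\|_1$, and the triangle inequality gives the reverse bound.

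This condition has three properties:
\begin{itemize}
\item it reduces to $(v_a-v_b)v_b\ge 0$ when $[A,B]=0$;
\item it implies $\operatorname{Tr}(P_+B)\ge 0$ but is not implied by it;
\item for traceless $2\times 2$ matrices it coincides with the lemma as stated.
\end{itemize}
The $2\times 2$ case is what the paper actually needs, for the qubit outputs $\mathcal E(A),\mathcal E(B)$ in the proof of the saturation theorem.
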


\begin{lemma}
Given $\Delta \in \mathcal B(H_m)$ a Hermitian matrix and a CPTP map $\lambda:  \mathcal B(H_m) \rightarrow  \mathcal B(H_d)$. Take the spectral decompositions $\Delta= \Delta_+ - \Delta_-$ and $\lambda(\Delta) = Q_+ - Q_-$. Let $\Pi_{+}$ and $\Pi_{-}$ be the projectors onto the supports of $Q_+$ and $Q_-$, respectively. We have 
\begin{equation*}
    \|\lambda_U(\Delta)\|_1 = \|\Delta\|_1
\end{equation*}
if and only if $\Pi_{\pm}\lambda(\Delta_{\pm})\Pi_{\pm}=\lambda(\Delta_{\pm})$ and $\Pi_{\pm}\lambda(\Delta_{\mp})\Pi_{\pm}=0$.

\end{lemma}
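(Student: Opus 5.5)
The plan is to compute $\|\lambda(\Delta)\|_1$ by linearity and positivity, and to read off exactly when the chain of estimates is tight. Here $\lambda_U$ in the statement is read as the map $\lambda$.

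Write $\Delta = \Delta_+ - \Delta_-$ with $\Delta_\pm \geq 0$ of orthogonal support. Then $\|\Delta\|_1 = \Tr\Delta_+ + \Tr\Delta_-$. Since $\lambda$ is positive, $\lambda(\Delta_\pm)\geq 0$, and since it is trace preserving, $\Tr\lambda(\Delta_\pm)=\Tr\Delta_\pm$. Next use the variational form of the trace norm for the Hermitian operator $\lambda(\Delta)=Q_+-Q_-$:
\[
\|\lambda(\Delta)\|_1 = \Tr[(\Pi_+-\Pi_-)\lambda(\Delta)].
\]
Expanding by linearity gives
\[
\|\lambda(\Delta)\|_1=\Tr[\Pi_+\lambda(\Delta_+)] - \Tr[\Pi_+\lambda(\Delta_-)] - \Tr[\Pi_-\lambda(\Delta_+)] + \Tr[\Pi_-\lambda(\Delta_-)].
\]
For each positive operator $X=\lambda(\Delta_\pm)$ we have $0\leq\Tr[\Pi X]\leq \Tr X$ for any projector $\Pi$. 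Hence the first and last terms are bounded by $\Tr\Delta_+$ and $\Tr\Delta_-$, and the two middle terms are $\leq 0$. This gives contractivity, and equality holds if and only if four conditions hold simultaneously:
\[
\Tr[\Pi_\pm\lambda(\Delta_\pm)]=\Tr\lambda(\Delta_\pm), \qquad \Tr[\Pi_\pm\lambda(\Delta_\mp)]=0.
\]

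The key step is to convert these trace equalities into the operator statements of the lemma. For $X\geq0$ and a projector $\Pi$, $\Tr[\Pi X\Pi]=0$ means the positive operator $\Pi X\Pi$ has zero trace, so $\Pi X\Pi=0$. This gives $\Pi_\pm\lambda(\Delta_\mp)\Pi_\pm=0$. Similarly, $\Tr[\Pi X]=\Tr X$ means $\Tr[(\mathbb{1}-\Pi)X(\mathbb{1}-\Pi)]=0$, so $(\mathbb{1}-\Pi)X(\mathbb{1}-\Pi)=0$. By positivity of $X$, this forces $X^{1/2}(\mathbb{1}-\Pi)=0$, hence $X(\mathbb{1}-\Pi)=0=(\mathbb{1}-\Pi)X$, i.e.\ $\Pi X\Pi = X$. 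This is the step to write carefully: it uses the standard fact that a positive operator with vanishing compression to a subspace annihilates that subspace. With it we obtain $\Pi_\pm\lambda(\Delta_\pm)\Pi_\pm=\lambda(\Delta_\pm)$.

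For the converse, assume the four operator identities. Taking traces gives $\Tr[\Pi_\pm\lambda(\Delta_\pm)]=\Tr\Delta_\pm$ and $\Tr[\Pi_\pm\lambda(\Delta_\mp)]=0$, using $\Tr[\Pi X]=\Tr[\Pi X\Pi]$. Substituting these into the expansion above yields $\|\lambda(\Delta)\|_1=\Tr\Delta_++\Tr\Delta_-=\|\Delta\|_1$. No serious obstacle is expected beyond the positivity argument. One should note only that $\Pi_\pm$ are defined from $\lambda(\Delta)$ itself, so the variational identity holds exactly rather than merely as a bound.
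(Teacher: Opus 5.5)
Your proof is correct and follows essentially the same route as the paper. The paper compresses $Q_+-Q_-=\lambda(\Delta_+)-\lambda(\Delta_-)$ by $\Pi_\pm$ and takes traces, which produces exactly your four-term expansion of $\Tr[(\Pi_+-\Pi_-)\lambda(\Delta)]$ with the same termwise bounds; your positivity argument turning the trace equalities into the operator identities (via $X^{1/2}(I-\Pi)=0$) makes explicit a step the paper only asserts.
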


The first equality demands that the coarse-grained operators become spectrally aligned, so that their contributions to the trace norm add constructively rather than destructively. The second requires that the CPTP map preserve the full distinguishability contained in $A-B$, which is only possible when the positive and negative sectors of the operator remain perfectly separated under the effective dynamics. Detailed proofs of these conditions are provided in Appendix~\ref{app: Conditions Lemma}.

Understanding when this bound can be saturated is essential for distinguishing intrinsic limitations of the dynamics from limitations arising purely from suboptimal multi-copy post-processing.

While each of these conditions can be satisfied individually by suitable choices of $\lambda$, their simultaneous fulfillment is nontrivial, where the existence of optimal CPTP may not be guaranteed. The following theorem characterizes precisely when this is possible for the class of traceless Hermitian operators relevant to our setting.

\begin{theorem}
Let $A, B \in \mathcal{B}(\cH_m)$ be traceless Hermitian operators, and define $\Delta = A - B$. Let $P_+$ and $P_-$ denote the projectors onto the strictly positive and strictly negative eigenspaces of $\Delta$, respectively, and $P_0 = I - P_+ - P_-$ the projector onto its kernel. There exists a CPTP map $\lambda: \mathcal{B}(\cH_m) \to \mathcal{B}(\cH_d)$, with $d \geq 2$, such that
\begin{equation}
    \|\lambda(A)\|_1 - \|\lambda(B)\|_1 = \|\lambda(A-B)\|_1 = \|A-B\|_1
    \label{eq:saturation_condition}
\end{equation}
if and only if there exists an operator $0 \leq E_0 \leq P_0$ satisfying
\[
\operatorname{Tr}(P_+B) + \operatorname{Tr}(E_0 B)
\;\geq\; 0.
\]
\label{Theorem saturating states}
\end{theorem}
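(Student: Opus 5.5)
The plan is to prove both directions through a single object, the effect operator $F=\lambda^{*}(\Pi_+)$. Here $\lambda^{*}$ is the Heisenberg-picture dual of $\lambda$, and $\Pi_+$ is the projector onto the positive part of $\lambda(\Delta)$. The structural input is tracelessness. Since $\operatorname{Tr}A=\operatorname{Tr}B=0$, we have $\operatorname{Tr}\Delta=0$, so $\operatorname{Tr}\Delta_+=\operatorname{Tr}\Delta_-=\tfrac12\|\Delta\|_1$. Moreover, a traceless Hermitian operator supported on a two-dimensional space spanned by orthogonal states $|0\rangle,|1\rangle$ has trace norm $2|\langle 0|X|0\rangle|$.

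\emph{Sufficiency.} Given $0\le E_0\le P_0$ with $\operatorname{Tr}(P_+B)+\operatorname{Tr}(E_0B)\ge 0$, I will build a measure-and-prepare channel. Define the two-outcome POVM $E_+=P_++E_0$ and $E_-=P_-+(P_0-E_0)$; these satisfy $0\le E_\pm$ and $E_++E_-=I$. Using $d\ge 2$, set
\[
\lambda(X)=\operatorname{Tr}(E_+X)\,|0\rangle\langle 0|+\operatorname{Tr}(E_-X)\,|1\rangle\langle 1| .
\]
This map is CPTP.
\begin{itemize}
\item For the difference: since $E_+\Delta_-=0$, we get $\operatorname{Tr}(E_+\Delta)=\operatorname{Tr}\Delta_+=\tfrac12\|\Delta\|_1$. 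Hence $\lambda(\Delta)=\tfrac12\|\Delta\|_1(|0\rangle\langle0|-|1\rangle\langle1|)$, which has trace norm $\|\Delta\|_1$.
\item For $A$ and $B$: write $a=\operatorname{Tr}(E_+A)$ and $b=\operatorname{Tr}(E_+B)$. Tracelessness gives $\operatorname{Tr}(E_-A)=-a$ and $\operatorname{Tr}(E_-B)=-b$, so $\|\lambda(A)\|_1=2|a|$ and $\|\lambda(B)\|_1=2|b|$.
\item Combining: $a-b=\tfrac12\|\Delta\|_1\ge0$, and the hypothesis says exactly $b\ge 0$. Therefore $2|a|-2|b|=2(a-b)=\|\Delta\|_1$, which gives Eq.~\eqref{eq:saturation_condition}.
\end{itemize}

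\emph{Necessity.} Suppose a CPTP $\lambda$ achieves Eq.~\eqref{eq:saturation_condition}. I will proceed in three steps.
\begin{enumerate}
\item From $\|\lambda(\Delta)\|_1=\|\Delta\|_1$, Lemma 2 gives $\Pi_+\lambda(\Delta_+)\Pi_+=\lambda(\Delta_+)$ and $\Pi_+\lambda(\Delta_-)\Pi_+=0$. Taking traces and using trace preservation, $\operatorname{Tr}(F\Delta_+)=\operatorname{Tr}\Delta_+$ and $\operatorname{Tr}(F\Delta_-)=0$.
\item Since $0\le F\le I$ (because $\lambda^*$ is unital and positive), the first identity says that $\operatorname{Tr}\big((I-F)\Delta_+\big)=0$ with $(I-F)\ge0$ and $\Delta_+$ positive definite on $\operatorname{ran}P_+$. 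This forces $(I-F)P_+=0$, i.e. $FP_+=P_+$. In the same way the second identity forces $FP_-=0$.
\item By Hermiticity, $F$ is then block diagonal with respect to $P_+,P_-,P_0$. So $F=P_++E_0$ with $E_0:=P_0FP_0$, and $0\le E_0\le P_0$.
\end{enumerate}
Finally, the first equality in Eq.~\eqref{eq:saturation_condition} is exactly the hypothesis of Lemma~\ref{cond 1}, applied to the pair $\lambda(A),\lambda(B)$, whose positive-part projector is $\Pi_+$. Lemma~\ref{cond 1} yields $\operatorname{Tr}(\Pi_+\lambda(B))\ge0$. Rewriting in the Heisenberg picture, $\operatorname{Tr}(\Pi_+\lambda(B))=\operatorname{Tr}(FB)=\operatorname{Tr}(P_+B)+\operatorname{Tr}(E_0B)$, which is the claimed condition.

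\emph{Main obstacle.} The delicate step is the last one. Lemma~\ref{cond 1} as stated phrases its condition through common eigenvectors, so I must make sure it really delivers the global inequality $\operatorname{Tr}(\Pi_+\lambda(B))\ge0$. If it does not, I would re-derive this inequality directly as follows.
\begin{itemize}
\item Write $\|\lambda(A)\|_1\ge\operatorname{Tr}\big((\Pi_+-\Pi_-')\lambda(A)\big)$, where $\Pi_-'=I-\Pi_+$.
\item Use $\|\lambda(B)\|_1\ge\big|\operatorname{Tr}\big((\Pi_+-\Pi_-')\lambda(B)\big)\big|$ together with tracelessness, so that $\operatorname{Tr}\big((\Pi_+-\Pi_-')X\big)=2\operatorname{Tr}(\Pi_+X)$ for $X=\lambda(A),\lambda(B)$.
\item Equality in the chain then forces $\|\lambda(B)\|_1=2\operatorname{Tr}(\Pi_+\lambda(B))$, and hence this quantity is nonnegative.
\end{itemize}
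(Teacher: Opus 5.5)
Your proof is correct and follows the paper's architecture. For sufficiency you use the paper's channel $\lambda_{cg}$, built from the POVM $\{P_++E_0,\,I-P_+-E_0\}$. For necessity, both arguments show $\lambda^\dagger(\Pi_+)=P_++E_0$ via Lemma~\ref{lemma cond 2}.

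The real difference is how $b=\operatorname{Tr}(\Pi_+\lambda(B))\ge0$ is obtained. The paper compresses to a qubit with $\mathcal D(Y)=\operatorname{Tr}(\Pi_+Y)|0\rangle\langle0|+\operatorname{Tr}((I-\Pi_+)Y)|1\rangle\langle1|$ and applies Corollary~\ref{corollary cond 1} to $\mathcal E(A),\mathcal E(B)$, where $\mathcal E=\mathcal D\circ\lambda$. This presupposes that $\mathcal E$ still saturates, but the paper only justifies $\|\mathcal E(\Delta)\|_1=\|\Delta\|_1$. Since $\|\mathcal E(A)\|_1=2|a|$, $\|\mathcal E(B)\|_1=2|b|$ and $a-b=\tfrac12\|\Delta\|_1$, the reverse-triangle equality for $\mathcal E$ is, for $\Delta\neq0$, equivalent to $b\ge0$ itself. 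Applying Lemma~\ref{cond 1} directly to the traceless pair $\lambda(A),\lambda(B)$, as you do, avoids that unproved step. Your derivation of $FP_+=P_+$ and $FP_-=0$ from $\operatorname{Tr}((I-F)\Delta_+)=0=\operatorname{Tr}(F\Delta_-)$ is also tighter than the paper's unargued claim that $\lambda(P_+)$ lies in the support of $\Pi_+$.

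On your ``main obstacle'': Lemma~\ref{cond 1} does state $\operatorname{Tr}(P_+B)\ge0$ as a global condition, and the direction you use is true. However, its proof goes through a false commutativity claim. For example, with $|+\rangle=(|0\rangle+|1\rangle)/\sqrt2$, the operators $A=|0\rangle\langle0|+|+\rangle\langle+|-2|2\rangle\langle2|$ and $B=|+\rangle\langle+|-|2\rangle\langle2|$ satisfy $\|A\|_1-\|B\|_1=\|A-B\|_1=2$, yet $[A,B]\neq0$. So a self-contained argument is worth having, and your fallback does not supply one.

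The fixed test operator $2\Pi_+-I$ gives only lower bounds $\|\lambda(A)\|_1\ge2a$ and $\|\lambda(B)\|_1\ge2|b|$. Combined with saturation, these yield merely $\|\lambda(B)\|_1-2b=\|\lambda(A)\|_1-2a\ge0$. In fact the claimed identity $\|\lambda(B)\|_1=2\operatorname{Tr}(\Pi_+\lambda(B))$ fails: take $\lambda=\mathrm{id}$ on $\mathbb C^3$, $\Delta=|0\rangle\langle0|-|1\rangle\langle1|$, $B=|2\rangle\langle2|-|1\rangle\langle1|$. Saturation holds, $\|B\|_1=2$, but $\operatorname{Tr}(\Pi_+B)=0$.

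The fix is to test with a norming operator of the sum, $T=\operatorname{sign}(\lambda(A))$. Equality in $\|\lambda(A)\|_1=\operatorname{Tr}(T\lambda(\Delta))+\operatorname{Tr}(T\lambda(B))\le\|\lambda(\Delta)\|_1+\|\lambda(B)\|_1$ forces $T\Pi_+=\Pi_+$ and $T\lambda(B)=|\lambda(B)|$. Hence $\operatorname{Tr}(\Pi_+\lambda(B))=\operatorname{Tr}(\Pi_+|\lambda(B)|)\ge0$.
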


\begin{proof}
See Appendix~\ref{app: general bound theorem}.
\end{proof}

Physically, the theorem shows that maximal distinguishability enhancement is only possible when the coarse-graining map can simultaneously isolate the positive and negative sectors of the operator difference $\Delta= A- B$ while preserving a compatible ordering between the spectral contributions of A and B. The kernel contribution encoded in $E_0$ reflects a residual freedom in how the null subspace of $\Delta$ is processed, revealing that saturation depends not only on the magnitude of the distinguishability imbalance, but also on its detailed spectral organization.

More generally, the theorem identifies a fundamental geometric obstruction in achieving the bound. Although distinguishability can always be partially concentrated through collective processing, achieving the absolute upper bound requires a highly constrained alignment between the spectral structures of the evolved operators.

In the context of the protocol, we are interested in operators of the form
\begin{equation}
A^{(n)} = \Lambda_1^{\otimes n}(\rho_1^{\otimes n} - \rho_2^{\otimes n}),
\quad
B^{(n)} = \Lambda_2^{\otimes n}(\rho_1^{\otimes n} - \rho_2^{\otimes n}),
\label{matrix forms}
\end{equation}

for some integer $n$. The theorem then implies that, once the underlying dynamics is fixed, the attainability of the general bound is entirely determined by the choice of input states. This observation underlies the ensemble-dependent behavior observed in the numerical analysis, as well as the trade-off between bound saturation and operational gain discussed in later sections.

Another important observation is that the upper bound \eqref{eq:ineq2} calculated for the operators \eqref{matrix forms} is non-decreasing with $n$. Indeed, the trace norm is contractive under the partial trace and $\Tr_1(A^{(n+1)}-B^{(n+1)})=A^{(n)}-B^{(n)}$, thus obtaining: $$||A^{(n+1)}-B^{(n+1)}||_1\geq||A^{(n)}-B^{(n)}||_1.$$

Finally, to implement the optimization over $\lambda$, we recall a Stinespring-type dilation theorem, which provides a complete parametrization of CPTP maps. Let $\cH_m \cong \mathbb{C}^{m}$ denote a finite-dimensional Hilbert space, and $\B(\cH_m)$ the algebra of linear operators acting on it.

\begin{theorem}[Stinespring-type dilation {\cite[Th.~2]{Nadja-proof}}]
Let $\lambda:\mathcal B(\mathcal H_m)\to\mathcal B(\mathcal H_d)$ be a CPTP map. Then there exists an auxiliary Hilbert space $\mathcal H_r$, with $\dim \mathcal H_r = r \le d$, and a unitary 
\[
U \in \mathcal U(\mathcal H_m \otimes \mathcal H_r \otimes \mathcal H_d)
\]
such that, for all $\psi \in \mathcal B(\mathcal H_m)$,
\begin{equation}
\lambda(\psi)
=
\operatorname{Tr}_{m,r}\!\left[
U\big(
\psi \otimes |0_r\rangle\langle 0_r| \otimes |0_d\rangle\langle 0_d|
\big)U^\dagger
\right].
\label{eq:stinespring}
\end{equation}
\end{theorem}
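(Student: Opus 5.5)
The plan is to build the dilation directly from a Kraus representation of $\lambda$, and then enlarge the resulting isometry to a unitary on $\cH_m\otimes\cH_r\otimes\cH_d$. The only nontrivial bookkeeping is dimension counting. We must check that the environment $\cH_m\otimes\cH_r$, which here includes a copy of the input space that is traced out, is large enough to host all Kraus operators while keeping $r\le d$.

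\emph{Step 1 (Kraus form and rank bound).} Since $\lambda$ is CPTP, Choi's theorem gives $\lambda(\psi)=\sum_{k=1}^{N}K_k\psi K_k^\dagger$ with $K_k:\cH_m\to\cH_d$ and $\sum_k K_k^\dagger K_k=I_m$. Taking the minimal (Choi-rank) decomposition, $N=\operatorname{rank} J(\lambda)\le md$, because the Choi matrix $J(\lambda)$ acts on $\cH_m\otimes\cH_d$. We then set $r:=\lceil N/m\rceil$. This gives $r\le d$ and $mr\ge N$.

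\emph{Step 2 (isometry).} Fix an orthonormal family $\{|e_k\rangle\}_{k=1}^N$ in $\cH_m\otimes\cH_r$, which is possible since $mr\ge N$. Define
\[
V:\cH_m\to\cH_m\otimes\cH_r\otimes\cH_d,\qquad V|\phi\rangle=\sum_{k}|e_k\rangle\otimes K_k|\phi\rangle .
\]
Orthonormality of the $|e_k\rangle$ and the trace-preservation condition give $V^\dagger V=\sum_k K_k^\dagger K_k=I_m$, so $V$ is an isometry. A direct computation then gives
\[
\operatorname{Tr}_{m,r}\!\left[V\psi V^\dagger\right]=\sum_{k,l}\langle e_l|e_k\rangle K_k\psi K_l^\dagger=\lambda(\psi)
\]
for every $\psi\in\B(\cH_m)$. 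By linearity this holds for all operators, not only states.

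\emph{Step 3 (unitary extension).} Let $S=\cH_m\otimes|0_r\rangle\otimes|0_d\rangle$, a subspace of dimension $m$. The map $W:|\phi\rangle|0_r\rangle|0_d\rangle\mapsto V|\phi\rangle$ is an isometry from $S$ onto $V(\cH_m)$, which also has dimension $m$. The orthogonal complements $S^\perp$ and $V(\cH_m)^\perp$ therefore have equal dimension $mrd-m$. Choosing any unitary between them and taking the direct sum with $W$ produces a unitary $U$ with $U(\psi\otimes|0_r\rangle\langle0_r|\otimes|0_d\rangle\langle0_d|)U^\dagger=V\psi V^\dagger$, which yields Eq.~\eqref{eq:stinespring}.

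I expect the main point needing care to be Step 1, namely the claim $N\le md$, which is what permits $r\le d$. Everything else is the standard Stinespring construction. If one prefers to avoid the minimal decomposition, one can always first reduce an arbitrary Kraus family to at most $md$ operators by diagonalizing the Choi matrix, since unitary mixing of Kraus operators leaves $\lambda$ unchanged.
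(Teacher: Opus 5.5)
Your proof is correct. The Choi-rank bound $N\le md$ gives $r=\lceil N/m\rceil\le d$ with $mr\ge N$. Tagging the Kraus operators with an orthonormal family in $\mathcal H_m\otimes\mathcal H_r$ yields an isometry that reproduces $\lambda$, and since both orthogonal complements have dimension $mrd-m$, it extends to the required unitary.

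The paper gives no proof of this statement; it imports it from the cited reference. Your construction is the standard Kraus, then isometry, then unitary-extension argument behind it. You could also simply fix $r=d$, since $N\le md$ already guarantees enough orthonormal vectors.
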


This representation is central to our numerical approach, as it allows the optimization over CPTP maps to be recast as an optimization over unitary operators. In particular, it provides a parametrization of the coarse-graining map in terms of $(mrd)^2$ real parameters via the Lie algebra of $\mathcal U(\mathcal H_m \otimes \mathcal H_r \otimes \mathcal H_d)$, enabling a systematic optimization. 

The protocol requires that for $n$ copies, the dimension of the input space is $m=2^n$, so the dimension increases exponentially, rendering the optimization computationally demanding even for moderate $n$. However, as we will see in the following sections, the cases $n=2$ and $n=3$ already capture the relevant qualitative behaviors.

\section{Effective Dynamics}
\label{sec:distillation}

The effective dynamics introduced in Ref.~\cite{AZEVEDO}, illustrated schematically in Fig.~\ref{fig:circuit}, aims to enhance the distinguishability change of quantum states after the action of a quantum channel by exploiting multi-copy processing. The central idea is to combine multiple independent realizations of a dynamical map and extract, via a suitable coarse-graining procedure, an effective dynamics with improved information-theoretic properties.

We consider $n$ identical copies of both the quantum channel and the input states. After the action of the channel, a coarse-graining map $\lambda$ is applied to the resulting $n$-fold tensor-product state, yielding an effective single-qubit quantum operation
\begin{equation}
    \Lambda'_i(\rho) = \lambda\!\left[\Lambda_i(\rho)^{\otimes n}\right],
    \label{effective dynamics}
\end{equation}
where the output space has the same dimension as the original system. While this construction preserves the system dimension, it enables the generation and processing of nontrivial correlations across copies, effectively reshaping the information flow of the dynamics.

\begin{figure}[h]
    \centering
    \includegraphics[width=0.5\linewidth]{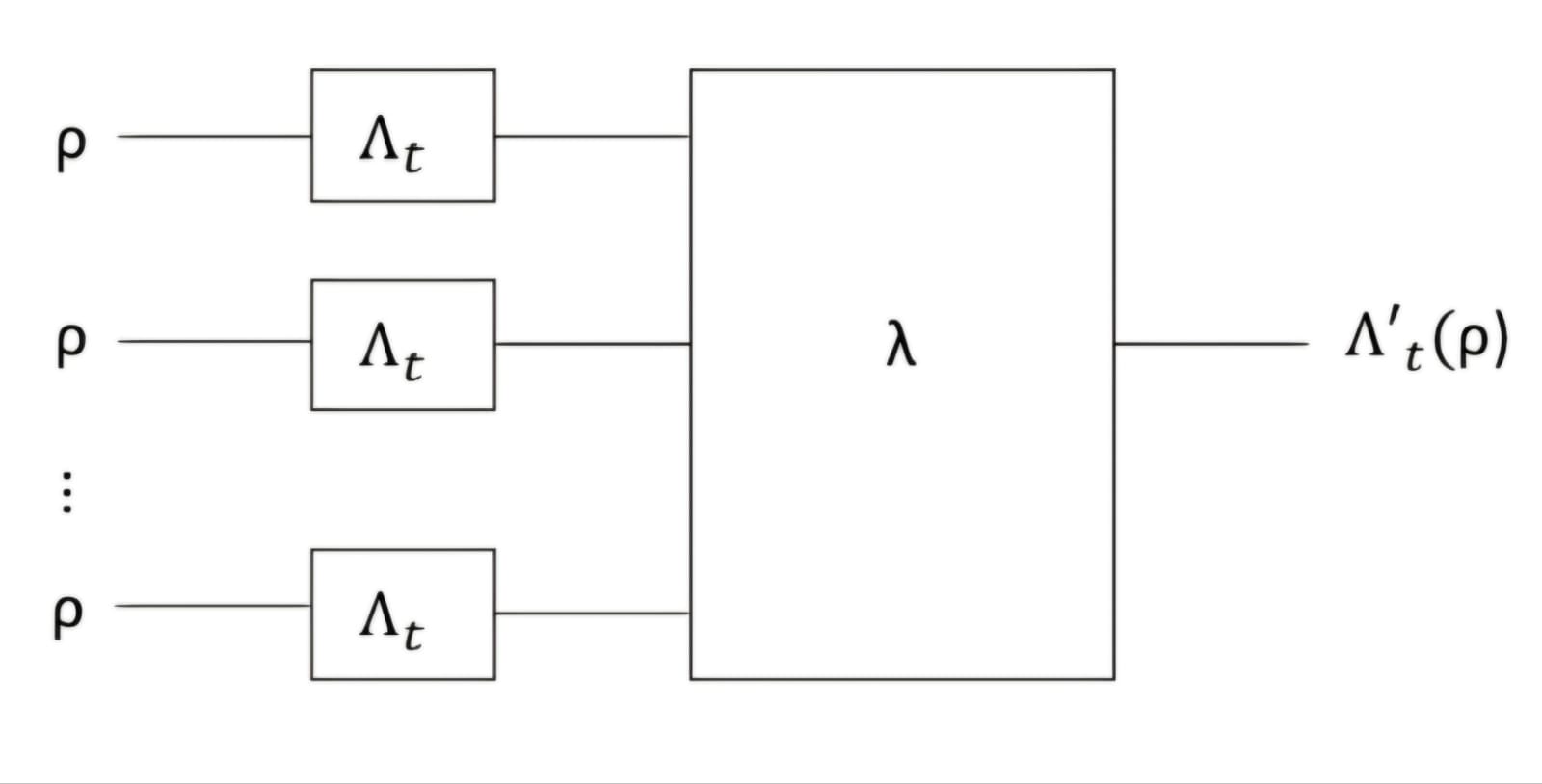}
    \caption{Schematic representation of the non-Markovianity protocol. \cite{AZEVEDO}}
    \label{fig:circuit}
\end{figure}

Formally, let $\mathcal H_A$ denote the Hilbert space of the original system, $\mathcal H_B$ an auxiliary space introduced by the coarse-graining, and $\mathcal H_C$ an ancilla space defining the output system. The map $\lambda$ is required to be completely positive and trace preserving, and can therefore be expressed via a Stinespring dilation as
\begin{equation}
    \lambda_U(\psi)
    =
    \operatorname{Tr}_{A,B}\!\left[
        U\!\left(
            \psi \otimes |0_B\rangle\!\langle 0_B|
            \otimes |0_C\rangle\!\langle 0_C|
        \right)
        U^\dagger
    \right],
    \label{coarse-graining}
\end{equation}
for a unitary
\[
U \in \mathcal U\!\left(\mathcal H_A^{\otimes n} \otimes \mathcal H_B \otimes \mathcal H_C\right).
\]

The operational figure of merit is the change in distinguishability induced by the effective dynamics. For a pair of input states $\rho_1$ and $\rho_2$, we define
\begin{equation}
    \Delta D_n'(\rho_1,\rho_2,U)
    = \frac{1}{2} \big(
    \bigl\| \Lambda'_2(\rho_2 - \rho_1) \bigr\|_1
    -
    \bigl\| \Lambda'_1(\rho_2 - \rho_1) \bigr\|_1\big),
    \label{eq:distilled_change}
\end{equation}
which quantifies the optimized distinguishability gain. The protocol seeks to maximize this quantity over $U$, with particular emphasis on achieving $\Delta D'_n > 0$, thereby inducing information backflow and promoting the dynamics to the essentially non-Markovian regime.

For fixed input states, number of copies $n$, and auxiliary dimension, the optimization reduces to a search over the unitary $U$ in Eq.~\eqref{coarse-graining}. However, this optimization is fundamentally constrained. Building on the framework established in Sec.~\ref{sec:preliminaries}, the achievable gain satisfies the upper bound
\begin{equation}
\Delta D_n'(\rho_1,\rho_2,U) \leq\left\lVert 
    \Lambda_2(\rho_2)^{\otimes n} - \Lambda_2(\rho_1)^{\otimes n}
    - \Lambda_1(\rho_2)^{\otimes n} + \Lambda_1(\rho_1)^{\otimes n} 
    \right\rVert_1.
\label{ineq: upper bound protocol}
\end{equation}
Independently, the trace distance imposes the structural constraint $\Delta D'_n \leq 1$. To ensure a physically meaningful benchmark, we therefore define the \emph{effective general bound}
\begin{equation}
    \beta_n(\varepsilon) = \min\left\{ 
    1, \right.
    \left.\left\lVert \Lambda_2(\rho_2)^{\otimes n} - \Lambda_2(\rho_1)^{\otimes n}
    - \Lambda_1(\rho_2)^{\otimes n} + \Lambda_1(\rho_1)^{\otimes n} 
    \right\rVert_1     \right\}. 
\label{eq:general_bound}
\end{equation}

The performance of the protocol is assessed by comparing the optimized value of $\Delta D'_n$ with both the original baseline $\Delta D$ and the bound $\beta_n$. This comparison allows us to quantify three distinct aspects: the absolute gain $\Delta D'_n - \Delta D$, the efficiency of the protocol relative to its fundamental limit, and the conditions under which bound saturation occurs. As will be shown, these features are not trivially aligned: in particular, maximal gain does not necessarily coincide with saturation of $\beta_n$, revealing a nontrivial trade-off between operational advantage and proximity to the bound. 

To gain analytical insight into these mechanisms, it is instructive to consider a simple yet representative pair of states, $\rho_1=|1\rangle\langle1|$ and $\rho_2=|0\rangle\langle0|$. Under the channel model, their single-copy evolution reads
\begin{equation}
\begin{split} 
    \Lambda_1(\rho_i) &= \frac{1}{2}\big[ I + (-1)^i(1 -2\varepsilon)Z\big],\\
    \Lambda_2(\rho_i) &= \frac{1}{2}\big[ I + (-1)^i(1 -4\varepsilon + 8\varepsilon^2)Z\big].
\end{split}
\label{eq:analytical_case}
\end{equation}
The corresponding original change is $\Delta D = |1-4\varepsilon+8\varepsilon^2| - |1-2\varepsilon|$. In the strongly non-Markovian regime ($0.25 \le \varepsilon \le 0.5$), this pair saturates the bound for $n=2$ already at the single-copy level, implying that no improvement is possible at the two-copies level.

Extending to $n$ copies, one can construct the operators
\begin{widetext}
    \begin{equation}
\begin{split}
    B &\coloneqq \Lambda_1^{\otimes n}(\rho_2^{\otimes n}-\rho_1^{\otimes n}) = \frac{1}{2^{n-1}}\bigg[ \sum_{k=1}^{m} (1 -2\varepsilon)^{2k-1} \sum_{j_1,...,j_{2k-1}}\prod_{l=1}^{{2k-1}} Z^{(j_l)}\bigg], \\
    A &\coloneqq \Lambda_2^{\otimes n}(\rho_2^{\otimes n}-\rho_1^{\otimes n}) = \frac{1}{2^{n-1}}\bigg[\sum_{k=1}^{m} (1 -4\varepsilon + 8\varepsilon^2)^{2k-1}\sum_{j_1,...,j_{2k-1}}\prod_{l=1}^{{2k-1}} Z^{(j_l)}\bigg].
\end{split}
\end{equation}
\end{widetext}
which admit an expansion in odd-body Pauli operators. In the strong regime, the eigenvalue ordering condition required by Theorem~\ref{Theorem saturating states} is satisfied, ensuring that the bound $\beta_n$ is attained. By contrast, in the weakly non-Markovian region ($\varepsilon < 0.25$), this ordering is reversed, and the analytical guarantee of saturation is lost.

This distinction has direct operational consequences. While $\Delta D$ is negative in the weak regime, the key question is whether the multi-copy protocol can nevertheless give $\Delta D'_n > 0$. Analytical methods alone are insufficient to resolve this, necessitating numerical optimization. As shown in Appendix~\ref{app: data trivial case}, the protocol exhibits a clear multi-copy threshold: for $n=2$, no regime activation occurs ($\Delta D'_2 = 0$), whereas for $n=3$, one obtains $\Delta D'_3 > 0$, signaling the operational change to essential non-Markovianity. In the strong regime, the bound is saturated, but no additional gain is observed for two copies and a small gain for three copies.

This example highlights two general features that will be explored in detail in the following sections. First, the effectiveness of the protocol depends sensitively on the parameter regime and the structure of the input ensemble. Second, increasing the number of copies can qualitatively alter the dynamics, enabling a change that is inaccessible at a lower number of copies. These observations motivate the comprehensive numerical analysis presented below.

Finally, we note that the qualitative behavior described here is not specific to the computational basis: choosing, for instance, the eigenstates of $X$ leads to an analogous structure, with identical conclusions.

\section{Results}
\label{sec:results}

To systematically evaluate the performance of the optimization, our numerical analysis proceeds in two stages. First, we establish the global qualitative behavior of the protocol across a broad landscape of initial states, analyzing ensembles of mixed, random pure, and orthogonal pure states. Having mapped the general statistical trends of bound saturation and multi-copy advantage, we next isolate the physical extremes of the protocol.

For each ensemble, we extract two representative pairs of states that delimit the onset of the effective transition. Specifically, the worst-case scenario is defined as the state pair exhibiting the smallest peak distinguishability change in the weakly non-Markovian region, formally given by
\[
\min\left[\max_{0<\varepsilon<0.25}\Delta D'_n(\varepsilon)\right].
\]
Conversely, the best-case scenario corresponds to the pair attaining the largest peak distinguishability change,
\[
\max\left[\max_{0<\varepsilon<0.25}\Delta D'_n(\varepsilon)\right].
\]

For the heatmaps, to enable a coherent visual comparison across different initial states and metrics, the vertical axis in all heatmaps corresponds to the randomly sampled state pairs, sorted by a uniquely defined global metric. 

Let $k \in \{1, 2, \dots, N\}$ denote the index of a state pair in a given ensemble, where $N$ is the total number of sampled pairs. We define a sorting score, $S_k$, based on the average tightness ratio achieved by the $k$-th pair at the two-copy level ($n=2$) strictly within the strongly non-Markovian regime ($\epsilon > 0.25$):
\begin{equation}
    S_k = \frac{1}{|\mathcal{E}_{\text{strong}}|} \sum_{\epsilon \in \mathcal{E}_{\text{strong}}} \frac{\Delta D'_{2, k}(\epsilon)}{\beta_2(\epsilon)}
\end{equation}
where $\mathcal{E}_{\text{strong}}$ is the discrete set of evaluated noise parameters satisfying $\epsilon > 0.25$, and $|\mathcal{E}_{\text{strong}}|$ is the cardinality of this set. 

Beyond its role as a visualization tool, the sorting score $S_k$ admits a natural operational interpretation. Since it quantifies the average proximity to the effective bound within the strongly non-Markovian regime at the two-copy level, it effectively acts as an intrinsic measure of saturation propensity for each state pair. 

Low values of $S_k$ identify configurations for which the protocol faces structural limitations already at $n=2$, whereas high values correspond to pairs that are naturally aligned with the saturation conditions of Theorem~\ref{Theorem saturating states}. In this sense, the induced ordering does not merely organize the data, but reveals a hierarchy of state pairs according to how favorably their spectral structure supports distinguishability saturation of the effective general bound under coarse graining.

The pairs are then arranged according to a permutation $\pi$ such that the scores are in ascending order:
\begin{equation}
    S_{\pi(1)} \le S_{\pi(2)} \le \dots \le S_{\pi(N)}
\end{equation}
The vertical axis in the heatmaps represents this normalized rank, $y_j = (j / N) \times 100\%$, mapping the state pair $\pi(j)$ to the $j$-th row from the bottom. 

Consequently, the $y$-axis intrinsically ranks the state pairs from those that are ``hardest'' to natively enhance at $n=2$ (bottom) to those that most easily saturate the effective general bound (top). Crucially, this exact same permutation $\pi$ is applied universally across all subsequent heatmaps---including those for $n=3$ copies, the absolute gain, and the optimal values. This rigid global ordering guarantees that any horizontal slice drawn across any panel or figure tracks the exact same physical state pair, allowing for a direct, one-to-one visual analysis of how expanding the target space to $n=3$ overcomes the limitations inherent to specific pairs at $n=2$.

In all following numerical evaluations, the freedom of the protocol is strictly encoded in the choice of the unitary operator $U$, which mediates correlations between the primary copies, the auxiliary system, and the target ancilla.

Figure~\ref{fig: tightness heatmap} illustrates the protocol's efficiency by mapping the tightness ratio, which quantifies how closely the optimized distinguishability approaches the effective general bound $\beta_n$. As the target space increases from $n=2$ to $n=3$, the tightness ratio remains remarkably stable. This consistency indicates that $\beta_n$ provides a reliable scaling benchmark for the optimal distinguishability gain, even in cases where the bound is not strictly saturated.

\begin{figure*}[ht]
    \includegraphics[width=0.75 \linewidth]{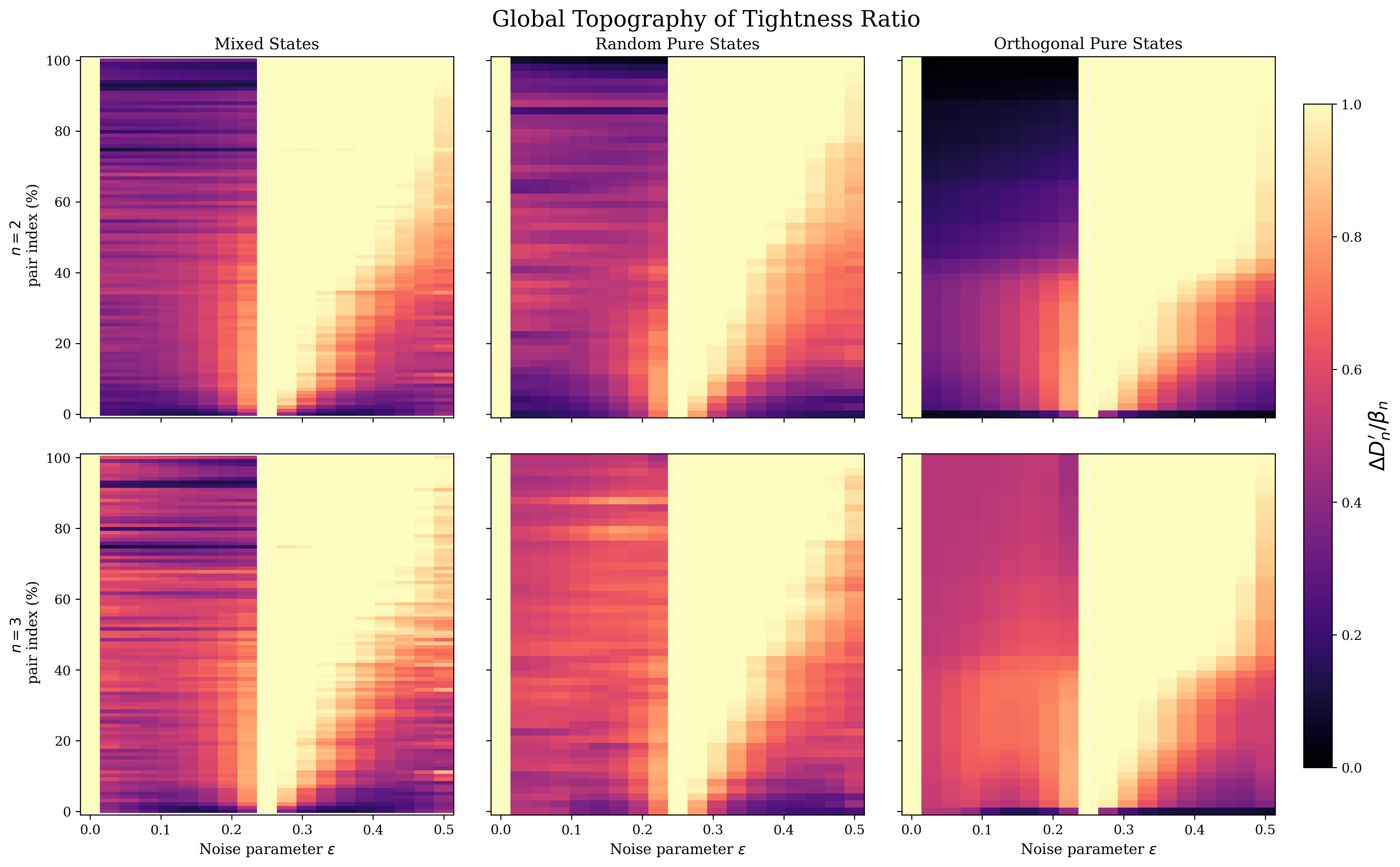}
    \caption{Global topography of the tightness ratio, $\Delta D'_n/\beta_n$, for $n=2$ (top row) and $n=3$ (bottom row) copies. The columns correspond to ensembles of mixed states, random pure states, and orthogonal pure states, respectively. This ratio quantifies the efficiency of the optimization, illustrating how closely the optimized distinguishability change approaches the effective general bound. The dashed vertical line at $\epsilon=0.25$ indicates the theoretical boundary between the weakly and strongly non-Markovian regimes. Notably, expanding the target space to $n=3$ does not result in a significant change of the ratio. Furthermore, orthogonal pure states demonstrate a distinct structural advantage in achieving maximum efficiency compared to the other ensembles, which have similar behaviors.}
    \label{fig: tightness heatmap}
\end{figure*}

Despite this global stability, the structure of the tightness landscape depends strongly on the ensemble. For orthogonal pure states, the distribution exhibits a smooth and ordered gradient, reflecting a uniform tendency to saturate the bound. In contrast, both non-orthogonal pure states and mixed states display a markedly fragmented structure, indicating a less efficient and less uniform approach to the bound.

This ensemble dependence can be understood directly from the saturation condition in Theorem~\ref{Theorem saturating states}, which is governed by the spectral structure of the operators $A$ and $B$ through the inequality
\[
\operatorname{Tr}(P_+A) + \operatorname{Tr}(E_0 A)
\;\geq\;
\operatorname{Tr}(P_+B) + \operatorname{Tr}(E_0 B)
\;\geq\; 0.
\]
For orthogonal pure states, the initial distinguishability is maximal, and the corresponding operators inherit a highly structured spectrum that facilitates the fulfillment of this condition across a broad parameter range. Although the open-system dynamics may partially suppress this distinguishability, the effective dynamics is able to recover it efficiently by exploiting multi-copy correlations. This is consistent with the general understanding that orthogonal state pairs are optimal for witnessing information backflow in non-Markovian dynamics~\cite{orthogonalstates1, orthogonalstates2}, and here we observe that they also provide a favorable structure for its recovery under coarse graining.

In contrast, for non-orthogonal pure states and mixed states, the initial distinguishability is intrinsically limited, and the spectra of the corresponding operators $A$ and $B$ become more intricate. This reduces the likelihood that the projector-based inequality can be satisfied, thereby restricting access to the saturation regime. As a consequence, the optimization must compensate through more complex redistributions of distinguishability, leading to the fragmented and less uniform patterns observed in the heatmaps. These results indicate that the efficiency of the protocol is ultimately controlled by how the spectral weight of $A$ and $B$ is distributed relative to the positive and null subspaces defined by $\Delta = A - B$, rather than by commutativity alone.

A particularly notable effect arises in the weakly non-Markovian regime, where $\Delta D \leq 0$. Even in this regime, the optimized effective dynamics yields a non-negative, and frequently positive, value of $\Delta D'_n$ (see Fig.~\ref{fig: optimization results}), thereby signaling a transition to essential non-Markovianity at the level of the effective dynamics.

The observed stability of the tightness ratio under the increase from $n=2$ to $n=3$ suggests that the effective general bound $\beta_n$ captures an intrinsic structural limitation of the protocol rather than an artifact of the optimization. In particular, while enlarging the target space increases the attainable distinguishability, it does not significantly alter how closely this optimum approaches the bound. This indicates that $\beta_n$ may encode constraints that are insensitive to the number of copies, and are instead governed by the spectral compatibility of the initial operators. Consequently, the tightness ratio should be understood as a structural indicator of how well a given state pair aligns with the saturation conditions, rather than as a direct proxy for operational advantage.

While tightness captures proximity to the bound, it does not directly quantify operational advantage. To address this, we examine the gain $\Delta D'_n - \Delta D$, shown in Fig.~\ref{fig: nM gain heatmap}. The results reveal a clear trade-off: regions exhibiting larger numerical gains tend to correspond to lower tightness ratios, particularly within the parameter space associated with strong non-Markovianity in the original dynamics. In the weakly non-Markovian regime, however, this correlation becomes less pronounced.

This behavior reflects the constraints imposed by the saturation conditions of Theorem~\ref{Theorem saturating states}. Configurations that closely approach the effective bound $\beta_n$ are structurally restricted, which limits the extent to which distinguishability can be further enhanced. Conversely, when these conditions are not met, the optimization retains greater flexibility, allowing for larger absolute gains. Thus, proximity to the bound and operational amplification probe distinct aspects of the protocol’s performance.

\begin{figure*}[h]
    \includegraphics[width=0.75\linewidth]{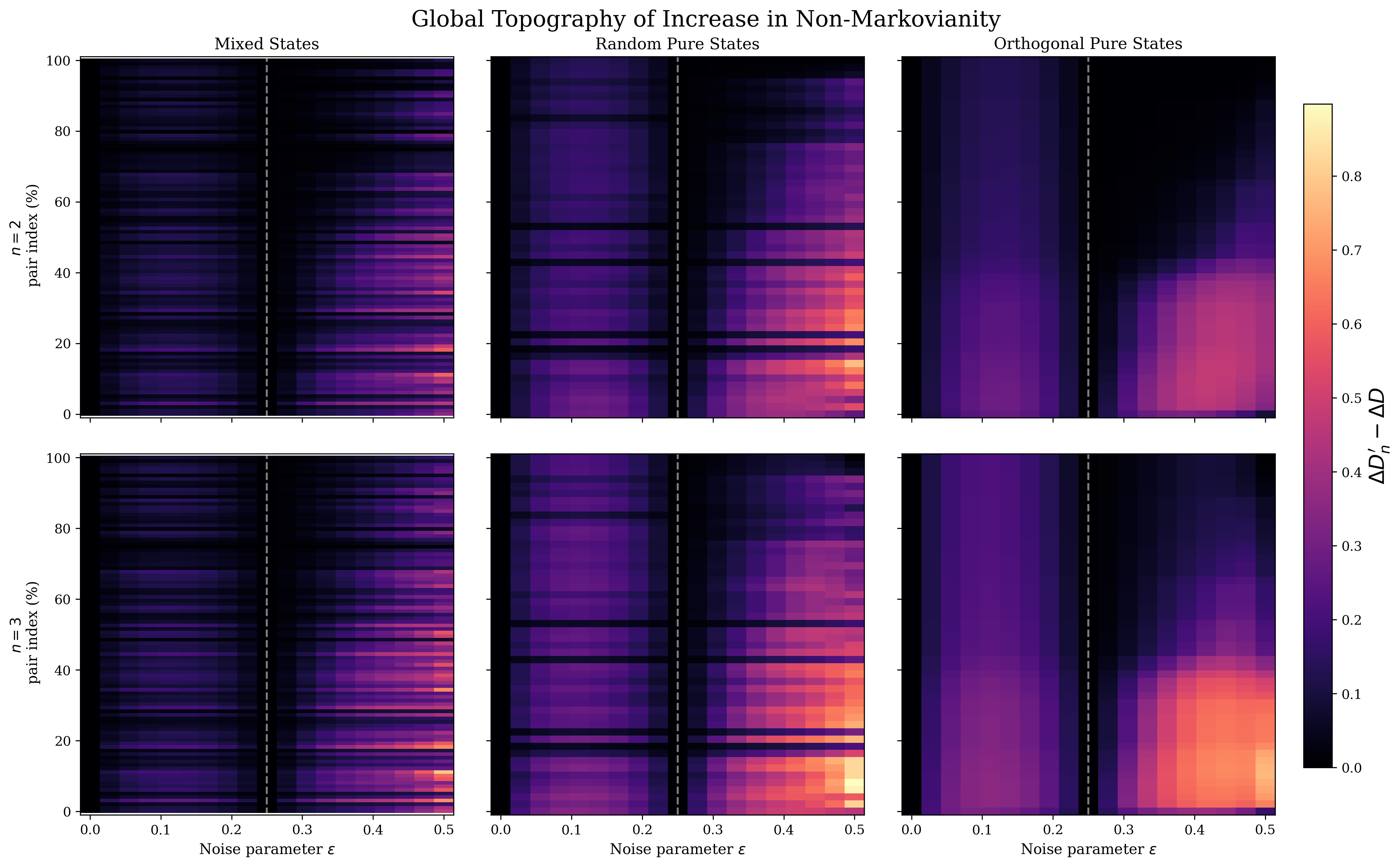}
    \caption{Global topography of the absolute increase in non-Markovianity, $\Delta D'_n - \Delta D$, for $n=2$ (top row) and $n=3$ (bottom row) copies. The columns correspond to ensembles of mixed states, random pure states, and orthogonal pure states, respectively. This difference quantifies the absolute gain achieved by the optimized effective dynamics relative to the original dynamics. The dashed vertical line at $\epsilon=0.25$ indicates the theoretical boundary between the weakly and strongly non-Markovian regimes. Expanding the target space to $n=3$ yields a substantial improvement in the gain across all ensembles, explicitly demonstrating the multi-copy advantage. Moreover, initializing the protocol with pure states yields significantly higher absolute gain compared to mixed state ensembles.}
    \label{fig: nM gain heatmap}
\end{figure*}

Across ensembles, orthogonal pure states again display a more regular and structured behavior, whereas the remaining ensembles exhibit increasingly irregular patterns. This reinforces that the interplay between $\Delta D'_n$, $\Delta D$, and $\beta_n$ is inherently ensemble-dependent. Nevertheless, a consistent trend emerges: increasing the number of copies enhances the performance of the optimized protocol in nearly all cases, highlighting the role of multi-copy processing.

Importantly, the transition from $n=2$ to $n=3$ reveals more than a quantitative improvement—it signals a qualitative threshold in the protocol's capabilities. In some cases where two copies are insufficient to induce an effective change, the addition of a third copy allows essential non-Markovianity to be witnessed. 

This behavior is indicative of a multi-copy activation mechanism, in which correlations accessible only at higher tensor powers unlock transformations that are otherwise forbidden. Such threshold effects are similar in phenomena in other areas of quantum information, including entanglement distillation and activation~\cite{Bennett1996,Horodecki1999}, as well as the superactivation of quantum channel capacities~\cite{Smith2008}.

The results also exhibit a clear dependence on the parameter $\varepsilon$. In the interval $0 \leq \varepsilon \leq 0.25$, the optimized value does not reach the bound, and the tightness ratio typically remains below $0.5$. By contrast, in the region $0.25 \leq \varepsilon \leq 0.5$, the bound is attained in a significant fraction of cases, with generally higher tightness values. However, this increase does not translate proportionally into larger gains, further underscoring that bound saturation and distinguishability amplification capture complementary aspects of the protocol’s behavior.

The distinct behavior across the two parameter intervals can be understood in terms of the underlying spectral ordering of the effective operators. In the weakly non-Markovian region, the eigenvalue structure prevents the satisfaction of the saturation conditions, enforcing low tightness and necessitating reliance on optimization-induced amplification. Conversely, in the strongly non-Markovian region, the natural ordering of eigenvalues increasingly aligns with the conditions of Theorem~\ref{Theorem saturating states}, enabling frequent saturation of $\beta_n$. However, as discussed above, this improved alignment does not translate into enhanced gain, further reinforcing the structural origin of the tightness–gain trade-off.

To better understand these effects, we turn to the extreme cases shown in Fig.~\ref{fig:regime_change_extremes}. Even in worst-case scenarios for mixed states and generic pure states, the emergence of operational signatures can, in principle, already be detected at the two-copy level through the positivity of $\Delta D'_n$. The only exception occurs for orthogonal pure states, for which this quantity remains identically zero.

\begin{figure*}[hb] 
    \centering
    \subfloat[Best-case: Mixed states\label{fig:best_mixed}]{%
        \includegraphics[width=0.3\textwidth]{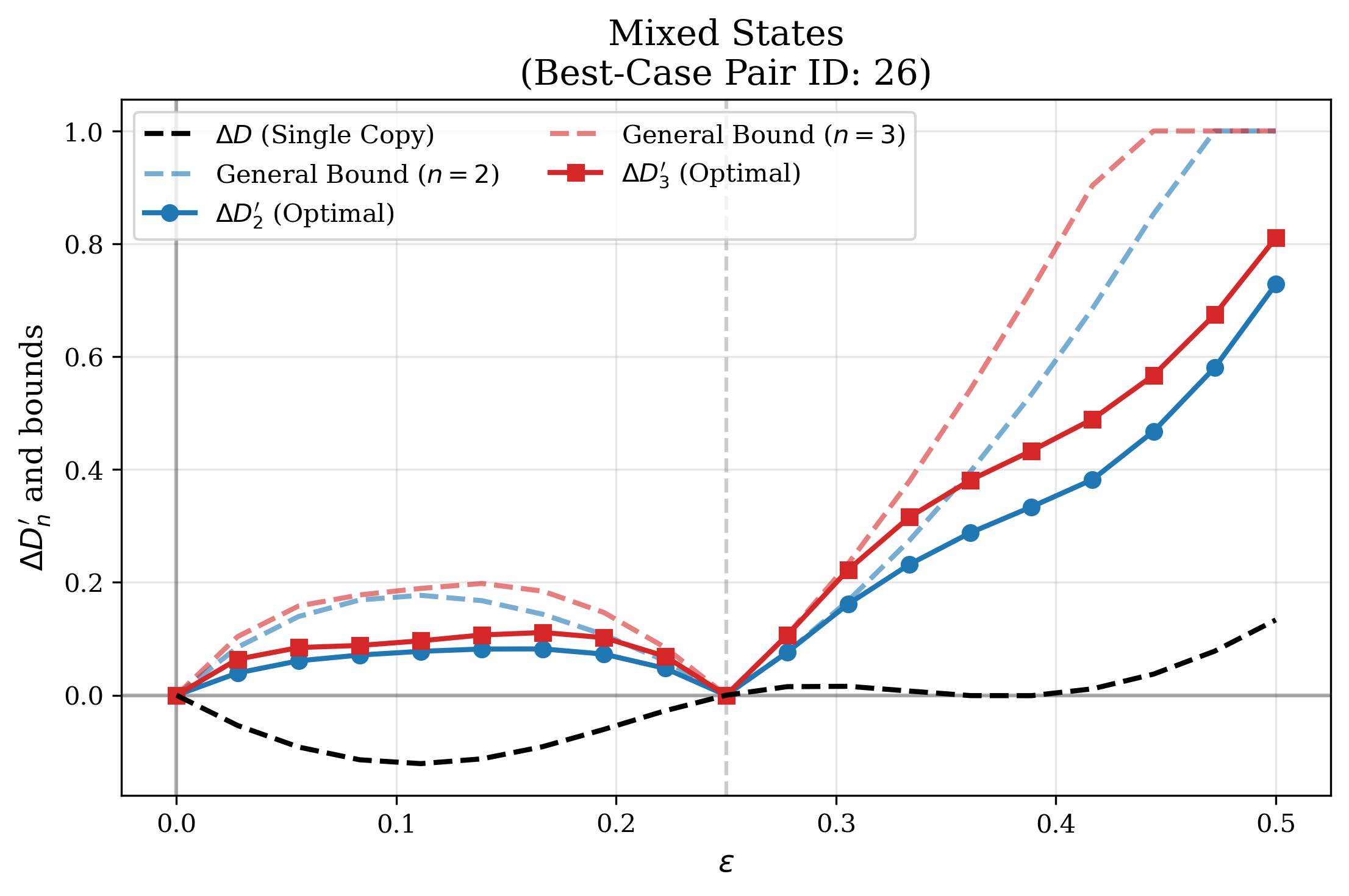}%
    }
    \hfill
    \subfloat[Best-case: Random pure states\label{fig:best_pure}]{%
        \includegraphics[width=0.3\textwidth]{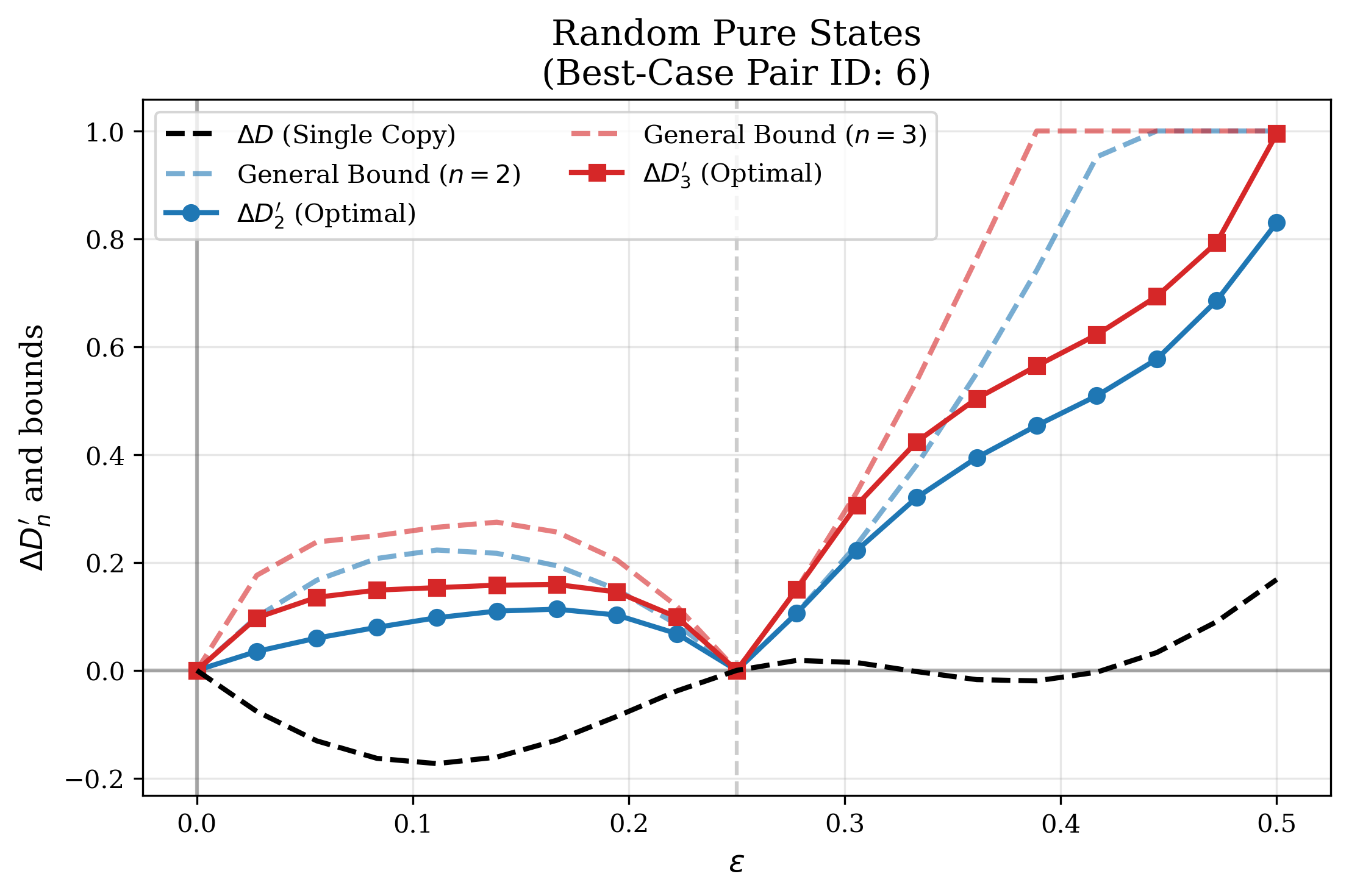}%
    }
    \hfill
    \subfloat[Best-case: Orthogonal pure states\label{fig:best_ortho}]{%
        \includegraphics[width=0.3\textwidth]{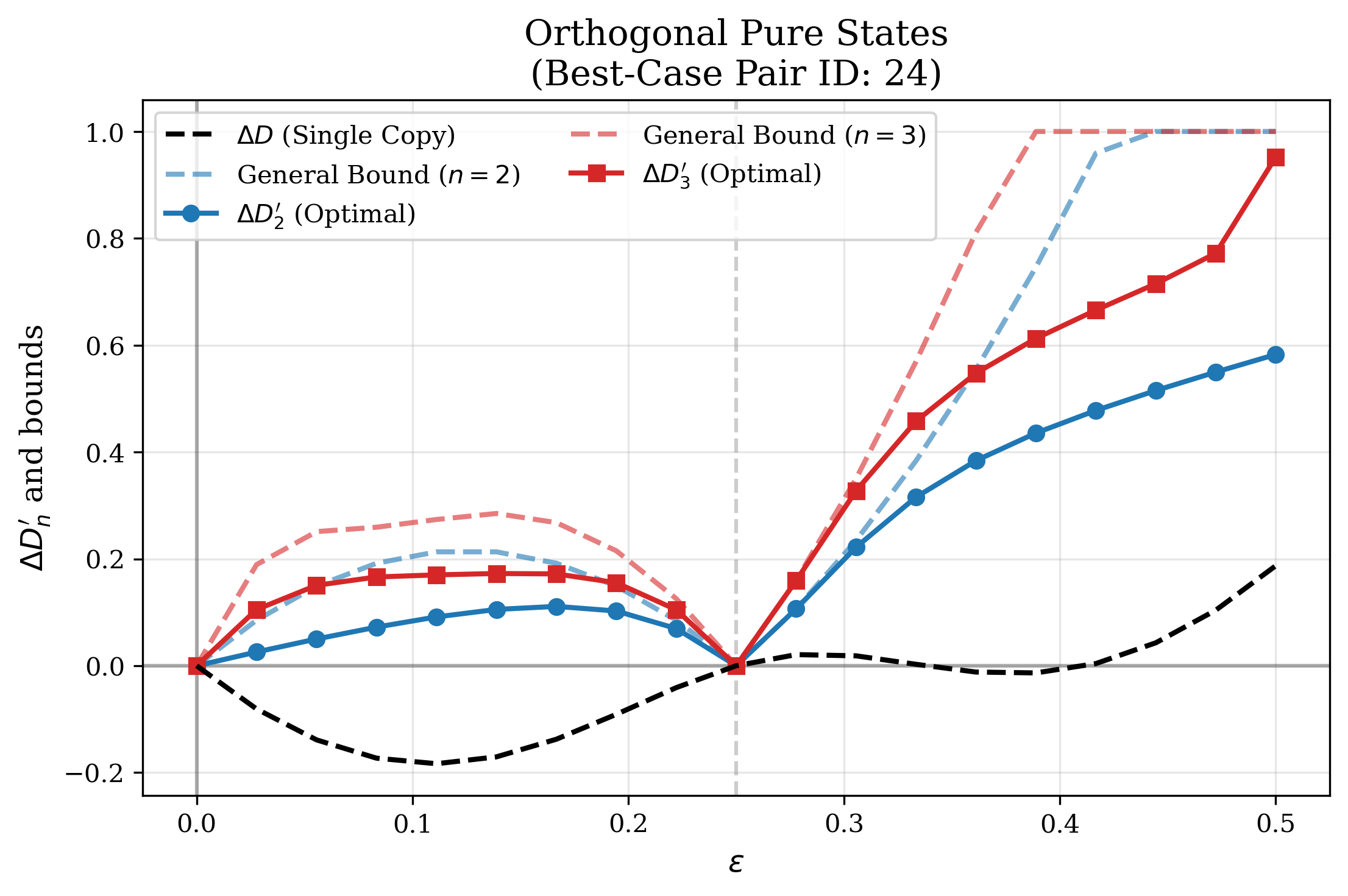}%
    }
    
    \vspace{1em}
    
    \subfloat[Worst-case: Mixed states\label{fig:worst_mixed}]{%
        \includegraphics[width=0.3\textwidth]{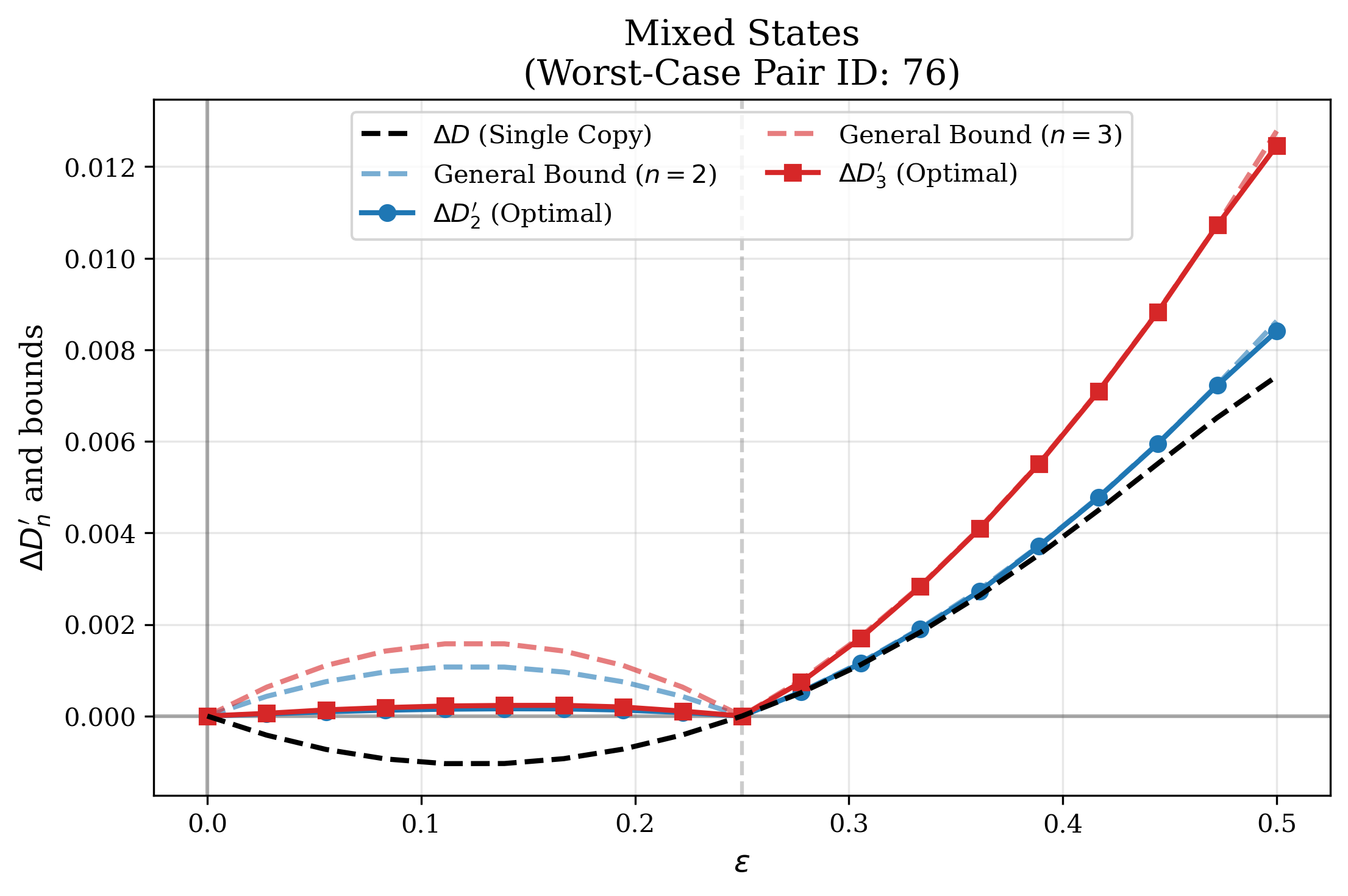}%
    }
    \hfill
    \subfloat[Worst-case: Random pure states\label{fig:worst_pure}]{%
        \includegraphics[width=0.3\textwidth]{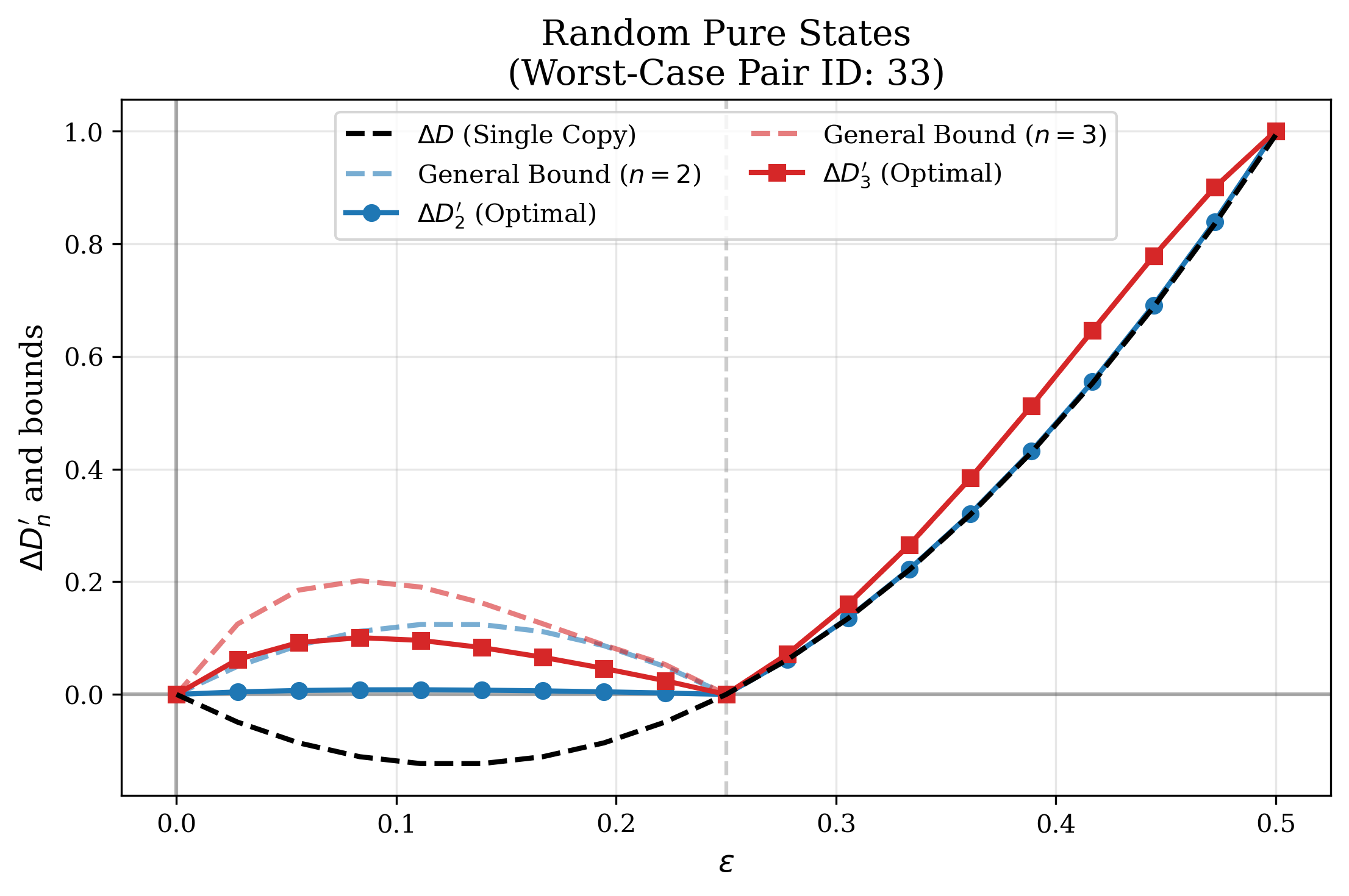}%
    }
    \hfill
    \subfloat[Worst-case: Orthogonal pure states\label{fig:worst_ortho}]{%
        \includegraphics[width=0.3\textwidth]{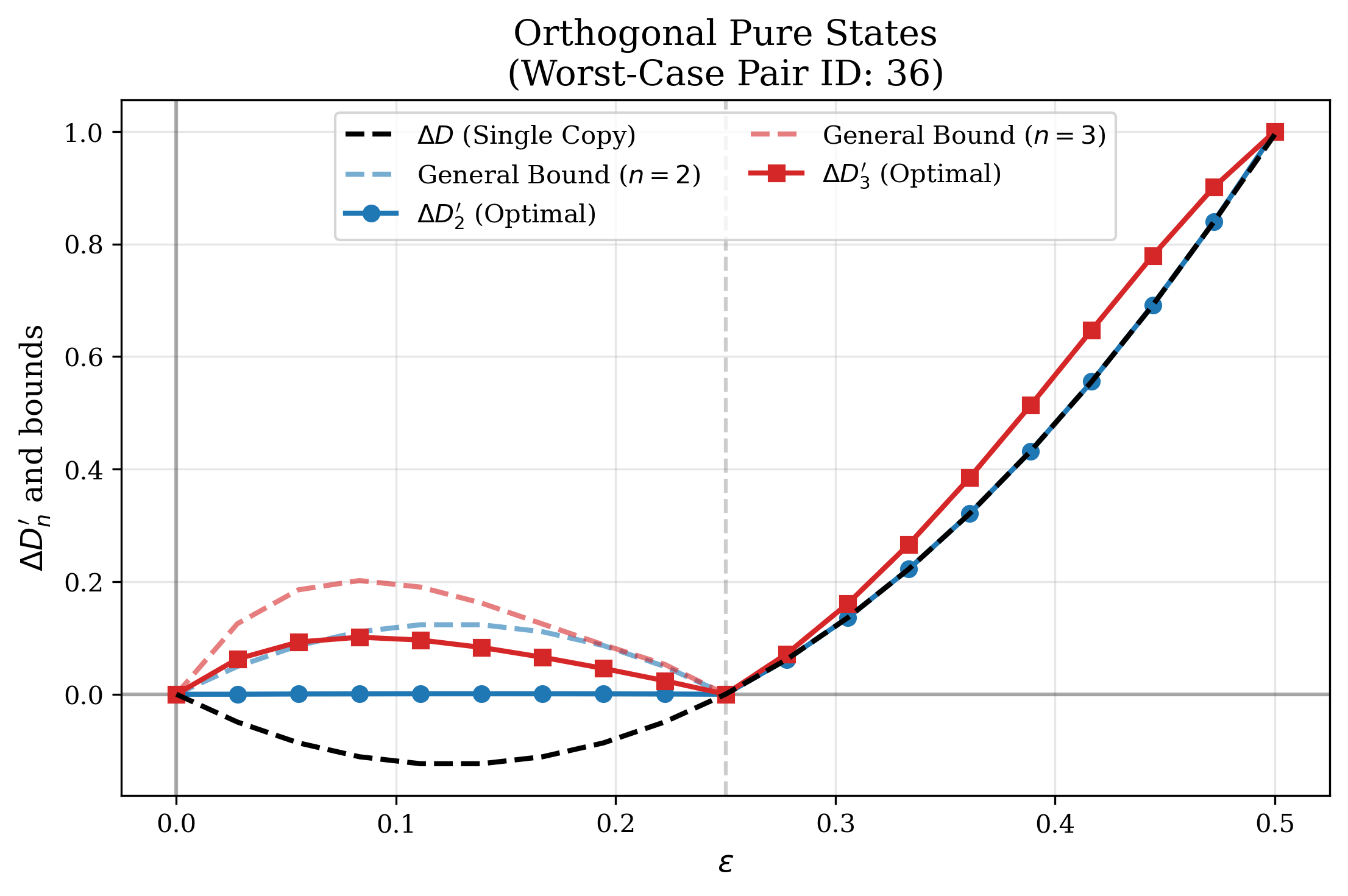}%
    }
    
    \caption{\textbf{Best- and worst-case scenarios for the effective transition to the strongly non-Markovian regime.} The top row (a--c) illustrates the best-case state pairs, representing the maximum achievable distinguishability change ($\Delta D'_n$) in the weakly non-Markovian region ($\epsilon < 0.25$). The bottom row (d--f) displays the corresponding worst-case pairs, representing the minimum positive value of $\Delta D'_n$ achieved by the protocol. The columns correspond to ensembles of mixed states (a, d), random pure states (b, e), and orthogonal pure states (c, f). These extreme scenarios delineate the absolute limits of the protocol's effectiveness: while best-case pairs readily undergo the effective transition, worst-case pairs reveal the fundamental structural bottlenecks, requiring three copies to produce an observable positive distinguishability change.}
    \label{fig:regime_change_extremes}
\end{figure*}

In practice, however, the magnitude of $\Delta D'_n$ in these extreme cases is typically very small, making experimental detection challenging. For pure states, increasing the number of copies leads to a substantial amplification of the optimized signal, potentially enabling experimental observation. In contrast, for mixed states in the worst-case regime, the values remain too small to be reliably detected, reflecting the extremely weak initial distinguishability across the parameter space.

Consistent with the trends identified above, these extreme cases also highlight that scenarios in which the bound is saturated tend to exhibit poorer overall performance, whereas larger gains are typically achieved when the bound is not reached. This behavior is further connected to the structure of the effective bound itself: the best-performing cases occur when $\beta_n$ is capped at $1$, due to the upper bound in Eq.~\eqref{ineq: upper bound protocol} exceeding unity at relatively small values of $\varepsilon$. This contrasts with the worst-case scenarios, where such saturation does not occur and performance remains limited.

Although the extreme cases were selected to emphasize the onset of the emergence, they also reveal a limitation of the protocol. In the region $0.25 \leq \varepsilon \leq 0.5$, there is essentially no improvement at the two-copy level—except for a marginal effect in the mixed-state example—and even with three copies the performance gain remains modest. This further underscores that the effectiveness of optimized multi-copy coarse-graining depends sensitively on both the parameter regime and the structure of the underlying ensemble.

\begin{figure*}[ht]
    \includegraphics[width=0.8\linewidth]{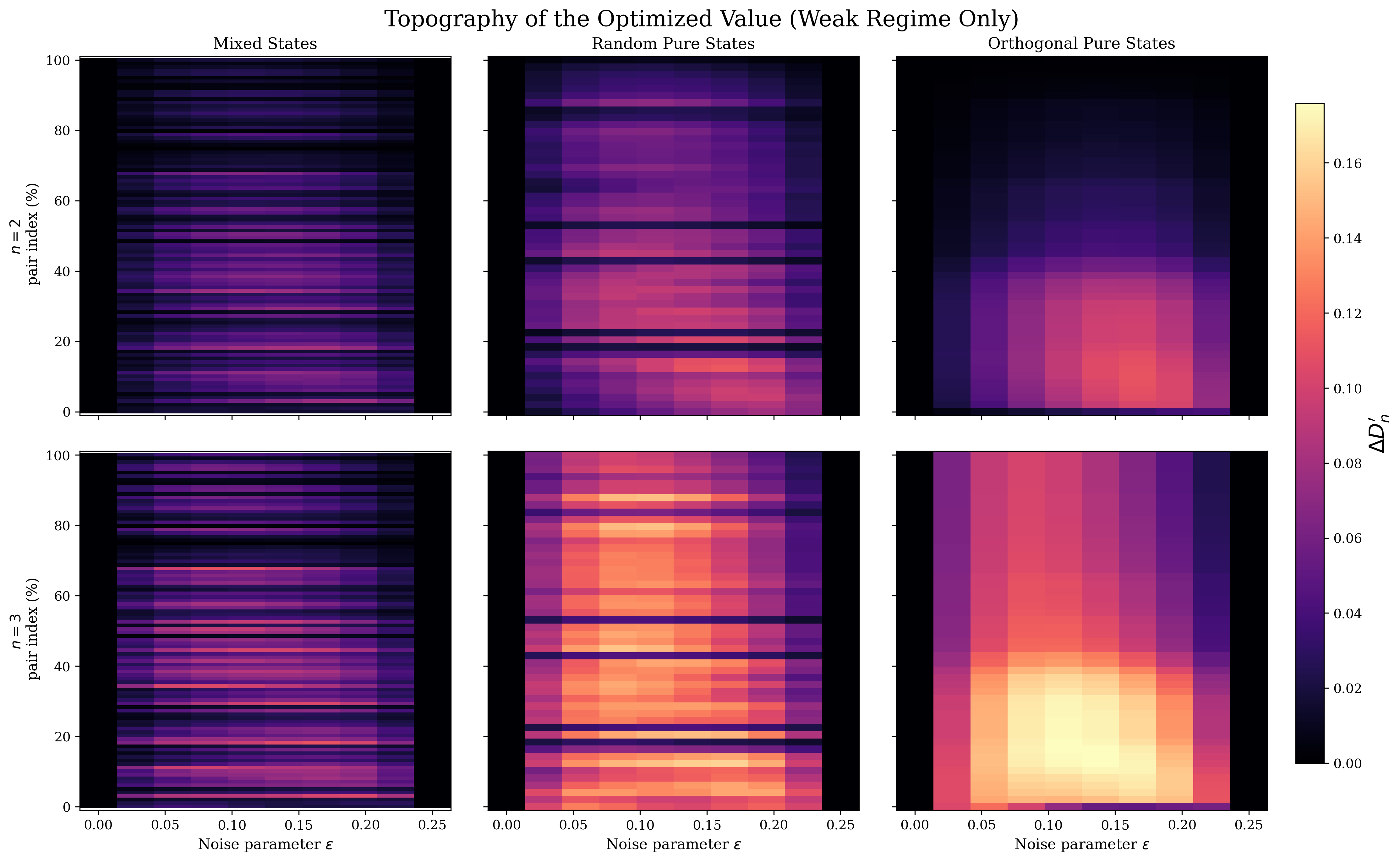}
    \caption{Heatmap of the optimized distinguishability change, $\Delta D'_n$, restricted to the parameter region $0 < \varepsilon < 0.25$, where the original dynamics is weakly non-Markovian and satisfies $\Delta D \leq 0$. The vertical axis follows the same global ordering of state pairs defined in the main text, ensuring consistency with the other figures. Despite the absence of information backflow in the original dynamics, the optimized effective dynamics yields non-negative—and in most cases strictly positive—values of $\Delta D'_n$. This behavior provides direct evidence of a protocol-induced transition to essential non-Markovianity at the level of the effective dynamics. The enhancement becomes more pronounced as the number of copies increases, illustrating the role of multi-copy processing in detecting hidden distinguishability.}
    \label{fig: optimization results}
\end{figure*}

A more detailed inspection of Fig.~\ref{fig: optimization results} reveals that the emergence of essential non-Markovianity is not restricted to a small subset of favorable state pairs, but instead occurs broadly across the entire ensemble. In particular, even state pairs that rank lowest according to the global ordering, corresponding to those that are hardest to optimize at the two-copy level, exhibit non-negative values of $\Delta D'_n$ once the target space is enlarged. This demonstrates that the observed effective transition is a systematic property of the optimized protocol rather than an isolated feature of a pair of initial states.

Importantly, this behavior complements the trade-off identified previously between tightness and operational gain. Throughout the region $\varepsilon < 0.25$, the optimized $\Delta D'_n$ never saturates the bound $\beta_n$, yet still yields a positive change in distinguishability. Interestingly, the emergence of positive distinguishability is predominantly observed in configurations where the structural conditions required for saturation are not fulfilled.

Finally, the comparison between different copy numbers shows that multi-copy processing plays a dual role: it not only enhances the magnitude of the change of distinguishability, but also allows a broader range of initial states to achieve positivity at the effective dynamics level. Indicating that access to higher tensor powers is a reliable ingredient to improve optimization efficiency.

\section{Discussion}
\label{sec: discussion}

Despite the underlying dynamics not being directly transformed by the protocol, the results demonstrate that an effective dynamics with a stronger operational memory signature can be constructed through collective processing. One possible scenario where such an approach could prove advantageous is when the microscopic system-environment interaction is fixed by the available hardware, making direct engineering of the dynamics impractical, thus providing a possible alternative operational route when reservoir engineering is challenging.

This flexibility does not come for free. It is necessary to have independent realizations of the same dynamics, prepare the initial states, while having to implement and maintain the CPTP map that is applied to the multiple copies. 

Whether the exchange of control over the microscopic dynamics compensates the need for a properly tuned post-processing and copies of the microscopic dynamics to achieve the enhancement of non-Markovianity will ultimately depend on the target information-processing task.

Furthermore, the results demonstrate that, despite the distinction between weak and strong non-Markovianity at the microscopic level, where no one-qubit CPTP map can convert a weak non-Markovian channel to a strong one, collective processing through coarse-graining can produce an effective dynamics with distinct observable informational properties by integrating out degrees of freedom. This fits within a broader paradigm in physics whereby coarse-graining produces effective descriptions whose operational properties differ qualitatively from those of the underlying microscopic dynamics.

This phenomenon is well illustrated by effective attractive interactions in superconductivity \cite{SuperconductivityBCS}, after integrating out phonon degrees of freedom, the increase in resistance at low temperatures in Kondo effect\cite{Kondoeffect,RGKondo} and Nakajima–Zwanzig projection generating memory kernels \cite{firstNakajima,secondNakajimaZwanzig,BreuerBOOK}.  In this perspective, the present protocol should not be interpreted as modifying the microscopic open-system evolution, but rather as constructing an effective dynamical description with distinct operational memory properties.

Accordingly, the results show that coarse-graining can qualitatively alter the memory properties assigned to a process. The observed information backflow is not necessarily a direct amplification of memory already present in the original dynamics, but may emerge from the effective description induced by collective processing.

The effective dynamics is defined by the completely positive maps available to the observer. If an observer has access to $n$ copies and can apply a joint CPTP map, the effective dynamics $\Lambda_t'$ is the physically relevant description for their information processing tasks. The memory is not just in the environment anymore; it is encoded in the cross-correlations of the 
$n$ copies, which the coarse-graining map extracts.

Interestingly, a trade-off was observed between mathematical optimality (bound saturation) and operational optimality (enhancement of distinguishability change). This suggests that satisfying the constraint imposed by Theorem\ref{Theorem saturating states} imposes strong limitations on the enhancement of distinguishability change.  

A central outcome of our analysis is the systematic generation of essential non-Markovianity from weakly non-Markovian dynamics. In almost all cases considered, and across the sampled range of $\varepsilon$, the optimized protocol yields a non-negative, and typically strictly positive, effective change in distinguishability. Since the violation of contractivity for a single pair of states suffices to certify the non-positivity of the intermediate map, this demonstrates that the effective dynamics $\Lambda'_t$ becomes essentially non-Markovian when optimized.

More precisely, the optimal unitary $U^\ast$ defines a coarse-graining map $\lambda_{U^\ast}$ which induces an effective dynamics $\Lambda'_t$~\eqref{effective channel} that exhibits information backflow, even when the original dynamics does not. Notably, this effective transition can already occur at the two-copy level when the unitary is suitably chosen. However, its observability depends on the structure of the input states: for certain pairs—such as those saturating the general bound—additional copies may be required to reveal the transition through the optimization of $\Delta D'_n$.

The optimization results exhibit a strong form of robustness. Because the coarse-graining map in Eq.~\eqref{coarse-graining} depends continuously on the unitary $U$, and the trace norm is itself continuous, the quantity $\Delta D'_n(\rho_1,\rho_2,U)$ varies continuously with respect to both the unitary and the input states. Consequently, if there exists a unitary $U^\ast$ and a pair of states $(\rho_1,\rho_2)$ such that $\Delta D'_n > 0$, then there exists an open neighborhood of unitaries and state pairs for which this positivity condition is preserved. 

This implies that the generation of essential non-Markovianity does not rely on fine-tuned control: near-optimal unitaries and approximate state preparations are sufficient to induce the effective transition. In particular, the continuity with respect to the input states ensures robustness against imperfections in state preparation, while continuity in the unitary guarantees stability under experimental control errors. Since the optimization yields, in general, different unitaries for different values of $\varepsilon$, the full set of activating transformations is given by the union of these neighborhoods, forming a nontrivial region in parameter space where strong non-Markovianity can be reliably observed.

The present work raises a broader question beyond non-Markovianity itself. To what extent do effective descriptions obtained through coarse-graining preserve the operational properties of the underlying microscopic dynamics? The generation of effective essential non-Markovianity demonstrated here suggests that such properties need not be invariant under changes of description. It would therefore be interesting to investigate whether analogous phenomena occur in many-body systems, where coarse-graining is a central ingredient in the construction of effective theories.

\section{Conclusions}

We have shown that the optimized effective dynamics provides a systematic and quantitatively controlled method for enhancing distinguishability in open quantum dynamics. At the quantitative level, the protocol achieves substantial gains relative to the original evolution. More importantly, it enables a qualitative transformation: weakly non-Markovian dynamics—characterized by the absence of observable information backflow—can be promoted to essentially non-Markovian dynamics, where backflow becomes operationally detectable through a positive change in trace distance.

From an operational perspective, this demonstrates that the observable degree of non-Markovianity is not determined solely by the microscopic dynamics, but also by the class of admissible processing operations used to interrogate it. Collective processing, therefore, provides an alternative route for enhancing operational memory signatures when direct control over the system-environment interaction is unavailable or impractical.

Beyond the effective transition itself, the work establishes a general optimization framework for distinguishability recovery, including a computable upper bound, necessary and sufficient saturation conditions, and a quantitative measure of protocol performance. The observed trade-off between mathematical optimality (bound saturation) and operational optimality further shows that saturating the theoretical bound does not necessarily maximize the attainable distinguishability gain.

The protocol is also shown to be operationally robust. The effective transition already appears at the two-copy level, becomes more systematic with additional copies, and remains stable under small enough perturbations of both the optimizing unitary and the input states owing to continuity of the coarse-graining map and the trace norm. These features suggest that the protocol may tolerate realistic imperfections in experimental implementations.

More broadly, the present work suggests that informational properties of quantum dynamics need not be invariant under the construction of effective descriptions. Understanding which dynamical features are preserved under coarse-graining, and which may emerge only at the effective level, may provide a broader perspective on the role of memory in open quantum systems and on the characterization of effective quantum dynamics.

\section*{Acknowledgements}

G. Moniz acknowledges financial support from the  Fundação de Amparo à Pesquisa do Estado de São Paulo (FAPESP), BCO Doutorado Direto, through Grant No. 2024/09165-7.

B. Amaral acknowledges financial support from the Instituto Serrapilheira, Chamada No. 4 2020 and Fundação de Amparo à Pesquisa do Estado de São Paulo, Auxílio à Pesquisa—Jovem Pesquisador, through Grant No. 2020/06454-7 (until September 2025). 

We further acknowledge the use of \emph{Gemini} to assist in the proof of Theorem \ref{Theorem saturating states}.


\bibliographystyle{unsrt}
\bibliography{references.bib}

\appendix

\section{Contractivity of trace-1 norm for positive maps and its connection to non-Markovianity regimes}
\label{app: distinguishability as witness of the Regime}

\begin{lemma}
\label{lemma: positivity}
    For every linear positive trace preserving map $\Phi : \mathcal{B}(H) \rightarrow \mathcal B(H)$, it holds
    \begin{equation}
        ||\Phi(A)||_1 \leq ||A||_1
    \end{equation}
    for every operator $A \in \mathcal{B}(H)$.
\end{lemma}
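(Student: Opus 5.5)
The natural route is to treat Hermitian $A$ first, since that is the only case used in Eq.~\eqref{eq:deltaD}, where $A=\Lambda_1(\rho_2-\rho_1)$. I would then extend to arbitrary $A$ by a duality argument.

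For Hermitian $A$, I would use the Jordan (spectral) decomposition $A = A_+ - A_-$. Here $A_\pm \geq 0$ have orthogonal supports, so $\|A\|_1 = \operatorname{Tr}A_+ + \operatorname{Tr}A_-$. By linearity $\Phi(A) = \Phi(A_+) - \Phi(A_-)$. Positivity gives $\Phi(A_\pm)\geq 0$, so the triangle inequality yields
\begin{equation*}
\|\Phi(A)\|_1 \leq \|\Phi(A_+)\|_1 + \|\Phi(A_-)\|_1 = \operatorname{Tr}\Phi(A_+) + \operatorname{Tr}\Phi(A_-).
\end{equation*}
This uses that the trace norm of a positive operator is its trace. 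Trace preservation then turns the right-hand side into $\operatorname{Tr}A_+ + \operatorname{Tr}A_- = \|A\|_1$. An equivalent variational form is $\|\Phi(A)\|_1 = \max_{0\le P\le I}\operatorname{Tr}[(2P-I)\Phi(A)]$, but the decomposition argument is more direct.

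For a general, non-Hermitian $A$ I would dualise. Let $\Phi^\dagger$ be the Hilbert--Schmidt adjoint. Trace preservation of $\Phi$ is equivalent to $\Phi^\dagger(I)=I$. Positivity of $\Phi$ is equivalent to positivity of $\Phi^\dagger$, because $\operatorname{Tr}[\Phi^\dagger(X)\rho]=\operatorname{Tr}[X\Phi(\rho)]\geq 0$ for all $X,\rho\geq0$. By trace-norm/operator-norm duality,
\begin{equation*}
\|\Phi(A)\|_1=\sup_{\|X\|_\infty\le 1}|\operatorname{Tr}[X\Phi(A)]|=\sup_{\|X\|_\infty\le 1}|\operatorname{Tr}[\Phi^\dagger(X)A]|\le \|A\|_1\sup_{\|X\|_\infty\le1}\|\Phi^\dagger(X)\|_\infty .
\end{equation*}
It therefore suffices to show $\|\Phi^\dagger(X)\|_\infty\le 1$ for every contraction $X$, not only Hermitian ones. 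This is the Russo--Dye theorem: a positive unital map on a $C^*$-algebra has norm $\|\Phi^\dagger(I)\|=1$.

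The main obstacle is this last step, since positivity alone only controls Hermitian inputs in an elementary way. I would invoke Russo--Dye directly. If a self-contained argument is wanted, I would first handle unitaries $X=U=\sum_k e^{i\theta_k}Q_k$ (spectral projections $Q_k$). The map $z\mapsto\Phi^\dagger$ applied to this commutative family is then a positive unital map from a commutative algebra, hence completely positive. Stinespring gives $\Phi^\dagger(U)=V^\dagger \pi(U) V$ with $V^\dagger V=I$, so $\|\Phi^\dagger(U)\|\le 1$. I would finish by writing any contraction as an average of unitaries, which is finite-dimensional Russo--Dye, e.g. $X=\tfrac12(U_1+U_2)$ via polar/singular value decomposition, and applying convexity of the norm.
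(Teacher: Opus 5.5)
Your Hermitian-case argument coincides with the paper's (Jordan decomposition, triangle inequality, positivity, trace preservation). The two routes diverge only for non-Hermitian $A$.

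The paper extends to anti-Hermitian operators via $Q\mapsto iQ$. It then splits a general $A$ into Hermitian and anti-Hermitian parts $H,K$ and applies the triangle inequality once more. That step does not deliver the stated constant. It gives $\|\Phi(A)\|_1\le\|H\|_1+\|K\|_1\le 2\|A\|_1$, and the middle quantity can genuinely exceed $\|A\|_1$. For example, $A=|0\rangle\langle1|$ has $\|A\|_1=1$, while $H=\tfrac12(|0\rangle\langle1|+|1\rangle\langle0|)$ and $K=\tfrac12(|0\rangle\langle1|-|1\rangle\langle0|)$ each have trace norm $1$.

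Your duality argument is the correct way to get constant $1$ for all $A\in\mathcal B(H)$. It reduces the claim to $\|\Phi^\dagger(X)\|_\infty\le1$ for contractions $X$, i.e.\ Russo--Dye for the positive unital adjoint. That is exactly the ingredient missing from the paper's argument. Your self-contained sketch is sound and can be simplified:
\begin{itemize}
\item For a unitary $U=\sum_k e^{i\theta_k}Q_k$, one has $\Phi^\dagger(U)=\sum_k e^{i\theta_k}F_k$ with $F_k=\Phi^\dagger(Q_k)\ge0$ and $\sum_kF_k=I$. Cauchy--Schwarz then gives $|\langle y|\Phi^\dagger(U)|x\rangle|\le\sum_k\|F_k^{1/2}y\|\,\|F_k^{1/2}x\|\le1$ for unit $x,y$, with no need for Stinespring.
\item A contraction $W\,\mathrm{diag}(\cos\theta_i)\,V^\dagger$ is the average of the unitaries $We^{\pm i\Theta}V^\dagger$, where $\Theta=\mathrm{diag}(\theta_i)$.
\end{itemize}
One detail worth stating explicitly: the pairing $\operatorname{Tr}[X\Phi(A)]=\operatorname{Tr}[\Phi^\dagger(X)A]$, with no dagger on $X$, relies on positive maps preserving Hermiticity.

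In short, the paper's route is elementary and covers every use the paper makes of the lemma, since it is only applied to Hermitian operators such as $\Lambda_1(\rho_2-\rho_1)$. Your route actually proves the lemma as stated, for arbitrary $A$.
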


\begin{proof}
    Consider $Q \in \mathcal{B}(H)$ hermitian. Thus, using Jordan decomposition, we can write $Q= Q_+ -Q_-$, where $Q_+,\ Q_-$ are positive operators, with orthogonal support. It follows two facts:
    \begin{itemize}
        \item $||Q||_1 = \Tr(Q_+) + \Tr (Q_-); \qquad \Tr(Q) =\Tr(Q_+) - \Tr (Q_-).$
    \item $\Phi(A_{\pm})$ is positive, thus $||\Phi(Q_{\pm})||=\Tr(\Phi(Q_{\pm}))=\Tr(Q_{\pm})$
    \end{itemize}
    
    Now, we simply use the triangular inequality and linearity of $\Phi$
    \begin{equation}
        ||\Phi(Q)||_1 = ||\Phi(Q_+)-\Phi(Q_-)||_1\leq ||\Phi(Q_+)||_1 + ||\Phi(Q_-)||_1 = ||Q||_1,
    \end{equation}
    which extends to anti-hermitian matrices via the isomorphism $Q\rightarrow iQ$.
    
    Finally, any operator $A \in \mathcal{B}(H)$ can be decomposed as a sum of its hermitian and anti-hermitian components $(A+i A^\dagger)/2$ and $(A-i A^\dagger)/2$, so again via the triangular inequality and linearity of $\Phi$
    \begin{equation*}
        ||\Phi(A)||_1 \leq ||A||_1
    \end{equation*}
    as desired.
\end{proof}

\begin{lemma}
Let $\{\Lambda_1, \Lambda_2\}$ be a time-discrete quantum evolution such that
\begin{equation}
    \Lambda_2 = V_{2,1} \circ \Lambda_1,
\end{equation}
where the intermediate map $V_{2,1}$ is positive and trace-preserving. Then, for any operator $A \in \mathcal{B}(\mathcal{H})$, it holds that
\begin{equation}
    \|\Lambda_2(A)\|_1 \leq \|\Lambda_1(A)\|_1.
\end{equation}
\end{lemma}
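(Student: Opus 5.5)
The plan is to reduce the statement to a single contraction property of the intermediate map. Since $\Lambda_2 = V_{2,1}\circ\Lambda_1$, for any $A\in\mathcal B(\mathcal H)$ I set $X := \Lambda_1(A)\in\mathcal B(\mathcal H)$, so that $\Lambda_2(A) = V_{2,1}(X)$. The claim then becomes $\|V_{2,1}(X)\|_1 \le \|X\|_1$ for the single operator $X$. It therefore suffices to show that every positive trace-preserving linear map $V:\mathcal B(\mathcal H)\to\mathcal B(\mathcal H)$ is a trace-norm contraction on \emph{all} of $\mathcal B(\mathcal H)$, not only on Hermitian operators. No property of $\Lambda_1$ beyond linearity is needed.

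For Hermitian $X$ this is exactly the Jordan-decomposition argument of Lemma~\ref{lemma: positivity}. Write $X = X_+ - X_-$. Then $V(X_\pm)\ge 0$, so $\|V(X_\pm)\|_1 = \operatorname{Tr}X_\pm$, and the triangle inequality gives $\|V(X)\|_1\le \|X\|_1$.

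For a general $X$, splitting into Hermitian and anti-Hermitian parts would only give a factor $2$. I will instead argue by duality. The key steps are as follows.

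First, use the variational formula $\|X\|_1 = \sup\{|\operatorname{Tr}(Y^\dagger X)| : \|Y\|_\infty\le 1\}$.

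Second, introduce the Hilbert--Schmidt adjoint $V^\ast$, defined by $\operatorname{Tr}(Y^\dagger V(X)) = \operatorname{Tr}(V^\ast(Y)^\dagger X)$. Positivity of $V$ implies positivity of $V^\ast$, because $\operatorname{Tr}(P\,V^\ast(Q)) = \operatorname{Tr}(V(P)Q)\ge 0$ for all $P,Q\ge0$. Trace preservation of $V$ is equivalent to $V^\ast(I)=I$, i.e.\ $V^\ast$ is unital.

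Third, invoke the Russo--Dye theorem: a positive linear map between matrix algebras satisfies $\|V^\ast\|_{\infty\to\infty} = \|V^\ast(I)\|_\infty$. Hence $\|V^\ast(Y)\|_\infty\le 1$ whenever $\|Y\|_\infty\le1$.

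Combining these gives
\begin{equation*}
\|V(X)\|_1=\sup_{\|Y\|_\infty\le1}\big|\operatorname{Tr}\big(V^\ast(Y)^\dagger X\big)\big|\le \sup_{\|Z\|_\infty\le1}|\operatorname{Tr}(Z^\dagger X)| = \|X\|_1 .
\end{equation*}
Applying this with $V=V_{2,1}$ and $X=\Lambda_1(A)$ yields $\|\Lambda_2(A)\|_1\le\|\Lambda_1(A)\|_1$.

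The main obstacle is the non-Hermitian case, specifically the step $\|V^\ast(Y)\|_\infty\le 1$ for arbitrary, not necessarily Hermitian, contractions $Y$. This is where Russo--Dye is genuinely needed. If a self-contained argument is preferred, I would first treat unitary $Y=\sum_k e^{i\theta_k}P_k$. Then $V^\ast(Y)=\sum_k e^{i\theta_k}V^\ast(P_k)$ with $V^\ast(P_k)\ge0$ summing to $I$. For unit vectors $u,w$, Cauchy--Schwarz for the positive semidefinite forms $V^\ast(P_k)$ gives $|\langle u,V^\ast(Y)w\rangle|\le \sum_k\langle u,V^\ast(P_k)u\rangle^{1/2}\langle w,V^\ast(P_k)w\rangle^{1/2}\le 1$. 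Finally, I would extend to all contractions by writing each contraction as a convex combination of unitaries.
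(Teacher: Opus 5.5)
Your proof is correct. At the top level it matches the paper: the paper also writes $\Lambda_2(A)=V_{2,1}(\Lambda_1(A))$ and invokes Lemma~\ref{lemma: positivity}. Your Jordan-decomposition treatment of Hermitian $X$ is exactly the one given there.

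Where you genuinely differ is the extension to non-Hermitian operators. The paper splits a general operator into Hermitian and anti-Hermitian parts $H,K$ and applies the triangle inequality. That only yields $\|\Phi(A)\|_1\le\|H\|_1+\|K\|_1$, which can be as large as $2\|A\|_1$; for $A=|0\rangle\langle 1|$ one has $\|H\|_1=\|K\|_1=\|A\|_1=1$. So, as you anticipated, that step does not deliver constant $1$, and the paper does not actually establish the claim ``for any operator $A\in\mathcal B(\mathcal H)$''.

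Your route closes this gap: trace-norm/operator-norm duality, positivity and unitality of the Hilbert--Schmidt adjoint $V^\ast$, and Russo--Dye to get $\|V^\ast(Y)\|_\infty\le 1$ for every contraction $Y$. Your self-contained version is also sound. It applies Cauchy--Schwarz to the positive forms $V^\ast(P_k)$ when $Y$ is unitary, then writes any contraction as the average of two unitaries via its singular value decomposition.

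The trade-off is as follows. The paper's argument is completely elementary and is valid for Hermitian inputs. That is all the paper ever uses, since the lemma is applied to $\rho_2-\rho_1$ and $\Lambda_1$ preserves Hermiticity. Yours needs one extra ingredient (Russo--Dye or its finite-dimensional proof), but it proves the lemma in the generality in which it is stated.
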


\begin{proof}
Using the decomposition $\Lambda_2 = V_{2,1} \circ \Lambda_1$ and applying Lemma~\ref{lemma: positivity}, which states that positive trace-preserving maps are contractive with respect to the trace norm, we obtain
\begin{equation}
    \|\Lambda_2(A)\|_1 = \|V_{2,1}(\Lambda_1(A))\|_1 \leq \|\Lambda_1(A)\|_1.
\end{equation}
\end{proof}

\section{Saturation of the triangular inequality}
\label{app: Conditions Lemma}

\renewcommand{\thesection}{\Alph{section}}
\begin{lemma}[Norm duality {\cite[Eq.\ 1.173]{Watrous_2018}}]
For any $A \in M_n(\mathbb{C})$ we have
\begin{equation}
    \|A\|_1 = \max \{ \operatorname{Tr}(BA) : B \in M_n(\mathbb{C}),\ \|B\|_\infty \leq 1 \}.
\end{equation}
\end{lemma}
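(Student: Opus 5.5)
The plan is to prove the two inequalities separately using the singular value decomposition of $A$. Write
\[
A=\sum_{i=1}^{n} s_i\,|u_i\rangle\langle v_i|,
\]
where $s_i\geq 0$ are the singular values and $\{|u_i\rangle\}$, $\{|v_i\rangle\}$ are orthonormal bases. Recall that $\|A\|_1=\sum_i s_i$ and that $\|B\|_\infty$ is the largest singular value of $B$. Since $\operatorname{Tr}(BA)$ is complex in general, I read the ``$\max$'' as a maximum of $\operatorname{Re}\operatorname{Tr}(BA)$, or equivalently of $|\operatorname{Tr}(BA)|$. I will show that the value at the maximizer is real and equal to $\|A\|_1$, so the two readings give the same result.

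For the upper bound, take any $B$ with $\|B\|_\infty\leq 1$. Linearity and cyclicity of the trace give
\[
\operatorname{Tr}(BA)=\sum_i s_i\,\langle v_i|B|u_i\rangle .
\]
By Cauchy--Schwarz, $|\langle v_i|B|u_i\rangle|\leq \|B u_i\|\,\|v_i\|\leq\|B\|_\infty\leq 1$, since both vectors are unit vectors. Hence $\operatorname{Re}\operatorname{Tr}(BA)\leq|\operatorname{Tr}(BA)|\leq\sum_i s_i=\|A\|_1$. This is the Hölder-type inequality $|\operatorname{Tr}(BA)|\leq\|B\|_\infty\|A\|_1$, and it bounds the supremum from above.

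For attainment, choose $B_\ast=\sum_i |v_i\rangle\langle u_i|$. Equivalently, if $A=W|A|$ is a polar decomposition with $W$ unitary, take $B_\ast=W^\dagger$. Then $B_\ast$ is unitary, so $\|B_\ast\|_\infty=1$ and $B_\ast$ is feasible. Orthonormality of the $u_i$ gives $B_\ast A=\sum_i s_i|v_i\rangle\langle v_i|=|A|$. Therefore $\operatorname{Tr}(B_\ast A)=\sum_i s_i=\|A\|_1$, which is real and matches the upper bound. This shows the supremum is attained, so it is a genuine maximum.

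The only delicate step is the attainment. When $A$ is singular, the vectors $|u_i\rangle,|v_i\rangle$ belonging to $s_i=0$ must be completed to full orthonormal bases. This is always possible and does not affect the value of the trace, so $B_\ast$ can be taken unitary rather than merely a partial isometry. In the Hermitian case used later (e.g.\ for $\Delta=\Delta_+-\Delta_-$), the maximizer reduces to $B_\ast=P_+-P_-+X_0$, where $X_0$ is an arbitrary contraction on the kernel of $\Delta$. This is the form I expect to be needed when applying the lemma to the saturation conditions.
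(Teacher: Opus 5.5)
Your proof is correct. The paper gives no argument of its own and only cites Watrous, so your SVD-based Hölder bound together with the unitary maximizer $W^\dagger$ from the polar decomposition is the standard proof being invoked. Your reading of the ``$\max$'' as a maximum of $\operatorname{Re}\operatorname{Tr}(BA)$ or $|\operatorname{Tr}(BA)|$ is also the right repair, since $\operatorname{Tr}(BA)$ is complex in general.

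Your description of the Hermitian-case maximizers as $P_+-P_-+X_0$, with $X_0$ an arbitrary contraction on the kernel, is accurate too. It shows that the paper's subsequent Remark (``the maximizer $B$ must be Hermitian'') is too strong. A Hermitian maximizer such as $\operatorname{sign}(A)$ always exists, which is all the later argument needs, but when $\ker A\neq\{0\}$ non-Hermitian maximizers also exist.
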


\begin{remark}
If $A$ is Hermitian, then the maximizer $B$ must be Hermitian.  
Indeed, otherwise $\operatorname{Tr}(BA)$ could be non-real, which contradicts the fact that the trace norm is real-valued. In fact, a maximizer is the \emph{sign operator} of $A$, i.e.\ $B = \mathrm{sign}(A)$.
\end{remark}

\begin{theorem}
    
Let $A$ and $B$ be $n \times n$ Hermitian matrices. Then
\[
 \|A\|_1 + \|B\|_1 = \|A+B\|_1
\]
if and only if $A$ and $B$ commute and, for every common eigenvector $|v\rangle$ with 
\[
A|v\rangle = v_a |v\rangle, \qquad B|v\rangle = v_b |v\rangle,
\]
the corresponding eigenvalues satisfy $v_a v_b \geq 0$.
\label{theorem triangular}
\end{theorem}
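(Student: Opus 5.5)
The plan is to prove both directions through the norm duality lemma stated just above, with $S=\mathrm{sign}(A+B)$ as the common witness. The ``if'' direction is direct. If $A$ and $B$ commute, choose an orthonormal basis $\{|v\rangle\}$ of common eigenvectors, so that $A=\sum v_a|v\rangle\langle v|$, $B=\sum v_b|v\rangle\langle v|$ and $A+B=\sum (v_a+v_b)|v\rangle\langle v|$. The sign condition $v_a v_b\ge 0$ gives $|v_a+v_b|=|v_a|+|v_b|$ for every $|v\rangle$. Summing over the basis yields $\|A+B\|_1=\|A\|_1+\|B\|_1$.

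For the ``only if'' direction, let $S=\mathrm{sign}(A+B)$; by the Remark it is Hermitian with $\|S\|_\infty\le 1$. Equality gives
\[
\|A\|_1+\|B\|_1=\operatorname{Tr}(S(A+B))=\operatorname{Tr}(SA)+\operatorname{Tr}(SB).
\]
Duality also gives $\operatorname{Tr}(SA)\le\|A\|_1$ and $\operatorname{Tr}(SB)\le\|B\|_1$, so both of these inequalities are equalities. Next I would characterize the equality case $\operatorname{Tr}(SA)=\|A\|_1$ with $\|S\|_\infty\le1$. Writing $A=\sum_i a_i|a_i\rangle\langle a_i|$, the condition reads $\sum_i a_i\langle a_i|S|a_i\rangle=\sum_i|a_i|$. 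Since $|\langle a_i|S|a_i\rangle|\le1$, this forces $\langle a_i|S|a_i\rangle=\mathrm{sign}(a_i)$ whenever $a_i\neq0$. Because $\|S\|_\infty\le 1$, a unit vector $x$ with $\langle x|S|x\rangle=\pm1$ satisfies $S|x\rangle=\pm|x\rangle$. Hence $S$ acts as $+1$ on the positive eigenspace of $A$ and as $-1$ on its negative eigenspace, so $SA=|A|$. The same argument gives $SB=|B|$. In words, $A$ and $B$ lie in the same sign sectors of the common operator $S$: the positive parts of both sit inside the $+1$ eigenspace of $S$, and the negative parts inside the $-1$ eigenspace. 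The eigenvalue sign condition $v_av_b\ge0$ then follows on any common eigenvector.

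The main obstacle is extracting commutativity of $A$ and $B$ from $SA=|A|$ and $SB=|B|$. These relations only say that $A$ and $B$ are each block diagonal with respect to the splitting into the $\pm1$ eigenspaces of $S$. Within a single block they are merely both positive semidefinite or both negative semidefinite, which does not force them to commute. In fact this shows the statement as worded needs an additional hypothesis: any non-commuting pair of positive semidefinite matrices gives $\|A+B\|_1=\operatorname{Tr}A+\operatorname{Tr}B=\|A\|_1+\|B\|_1$, so commutativity cannot be derived in general. At this step I would therefore first look for the hypothesis under which the theorem is actually used. In its application to Lemma~\ref{cond 1} one only needs a shared spectral decomposition on each sign sector of $S$, and the block argument above supplies exactly that. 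I would state the result in that block-diagonal form, with the sign condition holding sectorwise. The ``only if'' direction should be restricted accordingly, for example to pairs where the relevant blocks are one-dimensional or otherwise forced to commute.
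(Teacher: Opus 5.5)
Your diagnosis is correct: the ``only if'' direction is false as stated, and your counterexample is valid. For instance, $A=|0\rangle\langle0|$ and $B=|+\rangle\langle+|$ give $A+B\geq0$, hence $\|A+B\|_1=2=\|A\|_1+\|B\|_1$, while $[A,B]\neq0$.

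The paper's proof follows your route up to exactly the point you flag. It takes a common norming operator $W$ for $A$, $B$ and $A+B$ from duality. It shows that every eigenvector of $A$ or of $B$ with nonzero eigenvalue is an eigenvector of $W$ with eigenvalue $\pm1$. It then asserts that on $\mathcal S$ the matrices $A$ and $B$ ``are simultaneously diagonalizable since they share the $W$-eigenbasis there.'' That inference is invalid. $W$ has spectrum in $\{\pm1\}$ on $\mathcal S$, so its eigenspaces are degenerate; in your example one may take $W=I$. Being $W$-eigenvectors therefore places no constraint on how the eigenbases of $A$ and $B$ sit inside a common eigenspace of $W$. This is precisely your observation that $SA=|A|$ and $SB=|B|$ only make $A$ and $B$ block diagonal with respect to the $\pm1$ eigenspaces of $S$.

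Your choice $S=\mathrm{sign}(A+B)$ also has a minor advantage: it is Hermitian by construction. The paper instead leans on the Remark that every norming operator of a Hermitian matrix is Hermitian, which is not true, since $\mathrm{diag}(1,i)$ norms $\mathrm{diag}(1,0)$.

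Instead of restricting the hypothesis to one-dimensional blocks, close your argument into an exact characterization. Write $A=A_{+}-A_{-}$ and $B=B_{+}-B_{-}$ for the Jordan decompositions. Then $\|A+B\|_1=\|A\|_1+\|B\|_1$ if and only if $A_{+}B_{-}=0$ and $A_{-}B_{+}=0$. Equivalently, $\mathrm{supp}\,A_{+}+\mathrm{supp}\,B_{+}$ is orthogonal to $\mathrm{supp}\,A_{-}+\mathrm{supp}\,B_{-}$.
\begin{itemize}
\item Necessity is already in your argument: the positive supports lie in the $+1$ eigenspace of the Hermitian $S$ and the negative supports in its $-1$ eigenspace.
\item For sufficiency, let $\Pi_{\pm}$ project onto these two orthogonal subspaces and set $S=\Pi_{+}-\Pi_{-}$. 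Then $\operatorname{Tr}(S(A+B))=\|A\|_1+\|B\|_1$, so duality gives $\|A+B\|_1\geq\|A\|_1+\|B\|_1$, and the triangle inequality gives equality.
\end{itemize}
Commutativity together with $v_av_b\geq0$ is then sufficient but not necessary. Also rephrase your remark about the application: the block argument supplies a shared splitting into sign sectors, not a shared spectral decomposition within a sector.

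Finally, the tracelessness assumed in Lemma~\ref{cond 1} does not rescue commutativity. Take $A=|0\rangle\langle0|\oplus(-1)$ and $B=|+\rangle\langle+|\oplus(-1)$ on $\mathbb{C}^2\oplus\mathbb{C}$. Then the pair $A+B$, $B$ is traceless and non-commuting, and it satisfies $\|A+B\|_1-\|B\|_1=\|A\|_1$, i.e.\ $4-2=2$. Where that corollary is actually used, in the necessity part of Theorem~\ref{Theorem saturating states}, it is applied only to the diagonal, hence commuting, outputs of $\mathcal E=\mathcal D\circ\lambda$. There, the sign condition on common eigenvectors, which your argument does establish, is all that is needed.
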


\begin{proof}
Suppose first that $A$ and $B$ commute. Then they are simultaneously diagonalizable: there exists a unitary $U$ such that
\[
A = U \,\mathrm{diag}(a_1,\dots,a_n)\, U^\dagger, 
\qquad 
B = U \,\mathrm{diag}(b_1,\dots,b_n)\, U^\dagger.
\]
Hence
\[
A+B = U \,\mathrm{diag}(a_1+b_1,\dots,a_n+b_n)\, U^\dagger,
\]
and therefore
\[
\|A+B\|_1 = \sum_{i=1}^n |a_i+b_i|.
\]
On the other hand,
\[
\|A\|_1 + \|B\|_1 = \sum_{i=1}^n |a_i| + \sum_{i=1}^n |b_i|.
\]
These two expressions are equal if and only if $a_i b_i \geq 0$ for all $i$.

Conversely, assume $\|A+B\|_1 = \|A\|_1 + \|B\|_1$.  
By Lemma~A.1 there exist matrices $U,V,W$ with $\|U\|_\infty,\|V\|_\infty,\|W\|_\infty \le 1$ such that
\begin{equation*}
\begin{split}
    &\|A\|_1 = \operatorname{Tr}(UA), \qquad
\|B\|_1 = \operatorname{Tr}(VB),\\
&\|A+B\|_1 = \operatorname{Tr}(W(A+B)).
\end{split}
\end{equation*}

Since
\[
\operatorname{Tr}(W(A+B)) = \operatorname{Tr}(WA) + \operatorname{Tr}(WB),
\]
and each term is bounded above by the corresponding norm, equality forces
\[
\operatorname{Tr}(WA) = \|A\|_1, \qquad \operatorname{Tr}(WB) = \|B\|_1.
\]
Thus the same maximizer $W$ works simultaneously for $A$ and $B$.  

Now let $\{|a_i\rangle\}$ be an eigenbasis of $A$ with $A|a_i\rangle = a_i |a_i\rangle$. Then
\[
\|A\|_1 = \sum_i a_i \langle a_i |W|a_i\rangle \le \sum_i |a_i|,
\]
with equality only if $\langle a_i|W|a_i\rangle = \operatorname{sign}(a_i)$ when $a_i \neq0$.  
But since $\|W\|_\infty \le 1$ and $W$ is Hermitian, the condition 
\(\langle a_i|W|a_i\rangle = \pm 1\) implies $W|a_i\rangle = \pm |a_i\rangle$, since
\begin{equation}
    \begin{split}
        || (W\mp I)|a_i\rangle||^2 &= \langle a_i | (W \mp I)^2|a_i\rangle \\
    &= 1 \mp 2 \langle a_i | W |a_i\rangle  + \langle a_i | W^2 |a_i\rangle\\
    &= -1 + \langle a_i | W^2 |a_i\rangle \\
    &\leq 0,
    \end{split}
\end{equation}  
hence each eigenvector of $A$ with nonzero eigenvalue is also an eigenvector of $W$. The same argument applies to $B$, so every eigenvector of $B$, with non-zero eigenvalue, is also an eigenvector of $W$. 

Let $\mathcal S_A$ (resp.\ $\mathcal S_B$) denote the span of eigenvectors of $A$ (resp.\ $B$) corresponding to nonzero eigenvalues, and set 
\[ \mathcal S:=\mathcal S_A\cup\mathcal S_B. \]

 On the subspace $\mathcal S$, $A$ and $B$ are simultaneously diagonalizable since they share the $W$-eigenbasis there. To handle the case where a element of $S$ belongs to $\ker A$, let 
 \[ X := \{\, |a_i\rangle: a_i = 0\ \text{and}\ |a_i\rangle \in \mathcal S_B \,\} \subset S_B-S_A \]
 be the elements of the eigenbasis of $A$ with eigenvalue $0$ that lie in $S_B$. Let $k = |X|$ be the number of such vectors. Then there exist $k$ eigenvectors $|b_i\rangle$ of $B$ spanning the same subspace as $X$. Each $|b_i\rangle$ can be expressed as a linear combination of the vectors in $X$, so they are also eigenvectors of $A$ with eigenvalue $0$.
 
 The same argument applies symmetrically for $B$ and in both cases the product of the associated eigenvalues for a commum eigenvector is 0.
 
 Vectors in the orthogonal complement $\mathcal S^\perp = \ker A \cap \ker B$ are unconstrained by $A$ and $B$, and any linear combination of them is an eigenvector with eigenvalue $0$ for both matrices.
 
 Therefore, we can extend the $W$-eigenbasis in $\mathcal S$ with these vectors in the zero-eigenvalue subspace to obtain a full orthonormal basis of $\mathbb C^n$ that simultaneously diagonalizes $A$ and $B$. Hence $[A,B] = 0$.

\end{proof}

\begin{corollary}
Let $A, B \in \mathcal{B}(H_m)$ be traceless Hermitian matrices. Let $\Delta = A - B$ and $P_+$ be the projector onto the strictly positive eigenspace of $\Delta$.
\[
 \|A\|_1 - \|B\|_1 = \|A-B\|_1
\]
if and only if $A$ and $B$ commute and, for every common eigenvector $|v\rangle$ with 
\[
A|v\rangle = v_a |v\rangle, \qquad B|v\rangle = v_b |v\rangle,
\]
the corresponding eigenvalues satisfy $\text{Tr}(P_+B) \geq 0$.
\label{corollary cond 1}
\end{corollary}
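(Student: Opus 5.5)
The plan is to reduce the statement to Theorem~\ref{theorem triangular} by a change of variables. Write $A = B + \Delta$ with $\Delta = A - B$. Then the identity $\|A\|_1 - \|B\|_1 = \|A-B\|_1$ is equivalent to
\[
\|B + \Delta\|_1 = \|B\|_1 + \|\Delta\|_1 ,
\]
which is exactly the saturation of the triangle inequality for the Hermitian pair $(B,\Delta)$. Theorem~\ref{theorem triangular} applies directly. It says the equality holds if and only if $[B,\Delta]=0$ and $b_i\,\delta_i \ge 0$ for every common eigenvector $|v_i\rangle$, where $B|v_i\rangle = b_i|v_i\rangle$ and $\Delta|v_i\rangle=\delta_i|v_i\rangle$.

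Two translations remain. First, $[B,\Delta] = [B,A-B] = [B,A]$, so commutation of $B$ and $\Delta$ is the same as commutation of $A$ and $B$. A common eigenbasis of $A$ and $B$ is then also an eigenbasis of $\Delta$, with $\delta_i = v_a - v_b$. Second, I translate the sign condition $b_i\delta_i\ge 0$ into the projector language of the corollary. In a common eigenbasis, $P_+$ is the sum of the rank-one projectors onto the eigenvectors with $\delta_i>0$, and $P_-$ is the analogous sum over those with $\delta_i<0$. The sign condition then says $v_b\ge 0$ on every eigenvector in the range of $P_+$ and $v_b\le 0$ on every eigenvector in the range of $P_-$. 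Summing gives $\operatorname{Tr}(P_+B)=\sum_{\delta_i>0} b_i \ge 0$, and similarly $\operatorname{Tr}(P_-B)\le 0$. The tracelessness hypothesis enters only through $\operatorname{Tr}\Delta=0$: unless $\Delta=0$, both $P_+$ and $P_-$ are nonzero, so the condition is not vacuous. When $\Delta = 0$ the statement is trivial.

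The main obstacle is the reverse direction in the projector form. The trace inequality $\operatorname{Tr}(P_+B)\ge0$ is only an aggregate of the eigenvector-wise conditions. Taken alone it does not recover $b_i\ge0$ for each individual eigenvector in the range of $P_+$, nor the matching condition on $P_-$. So for the "if" direction I would read the corollary's clause ``for every common eigenvector $|v\rangle$'' literally. That is, I would impose the sign condition on each common eigenvector: $v_b\ge0$ whenever $v_a-v_b>0$, and $v_b\le0$ whenever $v_a-v_b<0$. Then $\operatorname{Tr}(P_+B)\ge0$ is the summed form recorded in the statement. Under this reading both directions follow immediately from Theorem~\ref{theorem triangular}.

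I would also check that the trace-level version is what the proof of Theorem~\ref{Theorem saturating states} actually uses. There, after a coarse-graining map $\lambda$ is applied, the relevant operators become effectively qubit-like with rank-one $P_\pm$, and the aggregate and eigenvector-wise conditions coincide.
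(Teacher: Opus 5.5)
Your reduction is the paper's own. The paper sets $B=W$ and $A=R+W$ (so $R=\Delta$), invokes Theorem~\ref{theorem triangular}, and then translates $(v_a-v_b)v_b\ge0$ into $\operatorname{Tr}(P_+B)\ge0$. Your suspicion about that last translation is correct, and it exposes an error in the paper's argument. The paper only shows that the eigenvector-wise condition implies $\operatorname{Tr}(P_+B)\ge0$. It then claims the condition on the negative sector is ``automatically satisfied'' by tracelessness, but tracelessness only gives the aggregate $\operatorname{Tr}((I-P_+)B)\le0$. Read literally, the ``if'' direction is false for $m\ge3$. Take $\Delta=\mathrm{diag}(1,1,-2)$, $B=\mathrm{diag}(2,-1,-1)$, $A=\mathrm{diag}(3,0,-3)$: they commute and give $\operatorname{Tr}(P_+B)=1>0$, yet $\|A\|_1-\|B\|_1=2<4=\|A-B\|_1$. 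So your eigenvector-wise reading is a necessary repair, not a convenience. With it, the ``if'' direction is sound, since it uses only the commuting half of Theorem~\ref{theorem triangular}, which is a direct diagonal computation.

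What still fails is the step you take on trust: the ``only if'' half of Theorem~\ref{theorem triangular}, which is what supplies $[B,\Delta]=0$. Write $B=B_+-B_-$ and $\Delta=\Delta_+-\Delta_-$. Equality $\|B+\Delta\|_1=\|B\|_1+\|\Delta\|_1$ only forces a norming contraction $W$ (e.g.\ $\mathrm{sign}(A)$) to satisfy $WB_\pm=\pm B_\pm$ and $W\Delta_\pm=\pm\Delta_\pm$, i.e.\ $B_+\Delta_-=0=B_-\Delta_+$. Because the $\pm1$ eigenspaces of $W$ are degenerate, sharing $W$-eigenvectors does not give a common eigenbasis; this is the gap in the paper's proof of that theorem. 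A traceless counterexample with $m=3$: take $B=|0\rangle\langle0|-|2\rangle\langle2|$ and $\Delta=|{+}\rangle\langle{+}|-|2\rangle\langle2|$ with $|{+}\rangle=(|0\rangle+|1\rangle)/\sqrt2$, and set $A=B+\Delta$. The block of $A$ on $\mathrm{span}\{|0\rangle,|1\rangle\}$ is positive with trace $2$, so $\|A\|_1=4$ while $\|B\|_1=\|\Delta\|_1=2$. Equality holds, yet $[A,B]=[\Delta,B]\neq0$. The statement must therefore be corrected to the condition $B_+\Delta_-=0=B_-\Delta_+$. This reduces to your eigenvector-wise signs when $A,B$ commute, and it implies commutation for $m=2$ but not for $m\ge3$. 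As you note, the paper applies the corollary only to diagonal $2\times2$ operators in the proof of Theorem~\ref{Theorem saturating states}, so that particular use is unaffected. Still, neither your proof nor the paper's establishes the general ``only if'' direction as stated.
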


\begin{proof}
Set $B=W$ and $A=R+W$. The condition becomes
\[
\|R+W\|_1=\|R\|_1+\|W\|_1.
\]
By Theorem \ref{theorem triangular}, this holds iff $[R,W]=0$ and $v_r v_w\ge0$ for eigenvalues corresponding to the same eigenvector.

Rewriting in terms of $A$ and $B$, this gives $[A,B]=0$ and $(v_a-v_b)v_b\ge0$. The latter condition can be expressed as $\mathrm{Tr}(P_+B)\ge0$. Indeed, for eigenvectors of $\Delta$ with positive eigenvalues ($v_a-v_b>0$), the inequality implies $v_b\ge0$, so their contribution to $\mathrm{Tr}(P_+B)$ is non-negative.

For eigenvectors with $v_a-v_b\le0$, the inequality requires $v_b\le0$. Since both $A$ and $B$ are traceless,
\[
\mathrm{Tr}(P_+B)=-\mathrm{Tr}((I-P_+)B),
\]
so this condition is automatically satisfied once $\mathrm{Tr}(P_+B)\ge0$.
\end{proof}

\begin{lemma}
Given $\Delta \in \mathcal B(H_m)$ a Hermitian matrix and a CPTP map $\lambda:  \mathcal B(H_m) \rightarrow  \mathcal B(H_d)$. Take the spectral decompositions $\Delta= \Delta_+ - \Delta_-$ and $\lambda(\Delta) = Q_+ - Q_-$. Let $\Pi_{+}$ and $\Pi_{-}$ be the projectors onto the supports of $Q_+$ and $Q_-$, respectively. We have 
\begin{equation*}
    \|\lambda_U(\Delta)\|_1 = \|\Delta\|_1
\end{equation*}
if and only if $\Pi_{\pm}\lambda(\Delta_{\pm})\Pi_{\pm}=\lambda(\Delta_{\pm})$ and $\Pi_{\pm}\lambda(\Delta_{\mp})\Pi_{\pm}=0$.
\label{lemma cond 2}
\end{lemma}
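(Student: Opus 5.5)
The plan is to prove the equality $\|\lambda(\Delta)\|_1=\|\Delta\|_1$ by running the chain of inequalities from the proof of Lemma~\ref{lemma: positivity} and forcing every step to be tight. Write $\Delta=\Delta_+-\Delta_-$ with $\Delta_\pm\ge 0$ of orthogonal support, so $\|\Delta\|_1=\operatorname{Tr}\Delta_++\operatorname{Tr}\Delta_-$. Since $\lambda$ is positive and trace preserving, $\lambda(\Delta_\pm)\ge 0$ and $\operatorname{Tr}\lambda(\Delta_\pm)=\operatorname{Tr}\Delta_\pm$. By the norm duality lemma (Watrous, Eq.~1.173) with the sign operator $S=\Pi_+-\Pi_-$ of $\lambda(\Delta)$, we get
\begin{equation*}
\|\lambda(\Delta)\|_1=\operatorname{Tr}\big(S\lambda(\Delta_+)\big)-\operatorname{Tr}\big(S\lambda(\Delta_-)\big).
\end{equation*}
For a positive operator $X$ and $-I\le S\le I$, we have $\operatorname{Tr}(SX)\le\operatorname{Tr}X$ and $-\operatorname{Tr}(SX)\le \operatorname{Tr}X$. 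Hence $\|\lambda(\Delta)\|_1\le\|\Delta\|_1$, and equality holds iff $\operatorname{Tr}\big((I-S)\lambda(\Delta_+)\big)=0$ and $\operatorname{Tr}\big((I+S)\lambda(\Delta_-)\big)=0$.

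Next, I translate these trace conditions into the projector conditions of the statement. The operator $I-S=2\Pi_-+\Pi_0$ is positive, where $\Pi_0$ is the kernel projector of $\lambda(\Delta)$. For $X\ge0$ and $P\ge 0$, $\operatorname{Tr}(PX)=0$ iff $P^{1/2}XP^{1/2}=0$, which for a projector $P$ is equivalent to $XP=0$. So the first condition gives $(\Pi_-+\Pi_0)\lambda(\Delta_+)=0$, i.e.\ $\lambda(\Delta_+)$ is supported in $\operatorname{ran}\Pi_+$. This is exactly $\Pi_+\lambda(\Delta_+)\Pi_+=\lambda(\Delta_+)$ together with $\Pi_-\lambda(\Delta_+)\Pi_-=0$. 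Symmetrically, the second condition gives $\Pi_-\lambda(\Delta_-)\Pi_-=\lambda(\Delta_-)$ and $\Pi_+\lambda(\Delta_-)\Pi_+=0$. I read the stated condition $\Pi_\pm\lambda(\Delta_\mp)\Pi_\pm=0$ in this sense.

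For the converse, assume the stated conditions. Positivity gives the support statements: if $X\ge0$ and $\Pi X\Pi=X$, then $\operatorname{supp}X\subseteq\operatorname{ran}\Pi$. Therefore $\operatorname{Tr}(S\lambda(\Delta_+))=\operatorname{Tr}\lambda(\Delta_+)$ and $-\operatorname{Tr}(S\lambda(\Delta_-))=\operatorname{Tr}\lambda(\Delta_-)$. Adding these and using trace preservation yields $\|\lambda(\Delta)\|_1=\operatorname{Tr}\Delta_++\operatorname{Tr}\Delta_-=\|\Delta\|_1$. Note that $\lambda_U$ in the statement is just $\lambda$ written in its Stinespring form from the dilation theorem above, so nothing about $U$ enters the argument.

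The main obstacle is interpretive rather than technical. The condition $\Pi_+\lambda(\Delta_-)\Pi_+=0$ alone does not force $\lambda(\Delta_-)$ to avoid the kernel $\Pi_0$, so the "if" direction needs the first condition. There, $\lambda(\Delta_-)=\Pi_-\lambda(\Delta_-)\Pi_-$ already excludes $\Pi_0$. I therefore expect to state explicitly that the two families of conditions are used jointly. The tight-inequality step, which uses that $\operatorname{Tr}(PX)=0$ with $P,X\ge0$ implies $PX=0$, is routine.
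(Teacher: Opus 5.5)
Your proof is correct and follows essentially the paper's argument. The paper projects $Q_+-Q_-=\lambda(\Delta_+)-\lambda(\Delta_-)$ with $\Pi_\pm$ and bounds $\operatorname{Tr}Q_\pm\le\operatorname{Tr}\lambda(\Delta_\pm)$; this is the same slack computation as your $\operatorname{Tr}(S\lambda(\Delta_\pm))$, grouped by output projector rather than by input component, and your version states more explicitly where $\lambda(\Delta_\pm)\ge 0$ turns the vanishing-trace conditions into the operator identities.
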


\begin{proof}
     We have that $Q_+ - Q_- =\lambda(\Delta_+)-\lambda(\Delta_-)$ , projecting the equation using $\Pi_{\pm}$ leads to
    \begin{equation*}
    Q_{\pm} = \Pi_{\pm}\lambda(\Delta_{\pm})\Pi_{\pm} -\Pi_{\pm}\lambda(\Delta_{\mp})\Pi_{\pm}.
    \end{equation*}
    Taking the trace yields 
    \begin{equation*}
        \text{Tr}(Q_{\pm}) \leq \text{Tr}(\lambda(\Delta_{\pm}))
    \end{equation*}
    with equality, iff $\Pi_{\pm}\lambda(\Delta_{\pm})\Pi_{\pm}=\lambda(\Delta_{\pm})$ and $\Pi_{\pm}\lambda(\Delta_{\mp})\Pi_{\pm}=0$. Since equality is also equivalent to  $||\lambda(\Delta)||_1=||\Delta||_1$, the statement follows.
\end{proof}

\section{General Bound Saturation Analysis}
\label{app: general bound theorem}

\begin{theorem}
    Let $A, B \in \mathcal{B}(H_m)$ be traceless Hermitian matrices, and let $\Delta = A - B$. Let $P_+$ (resp. $P_-$) be the projector onto the strictly positive (resp. negative) eigenspace of $\Delta$ and $P_0=I-P_+-P_-$ on its Kernel. There exists a CPTP-map $\lambda: \mathcal{B}(H_m) \to \mathcal{B}(H_d)$, for $d\geq 2$, such that
    \begin{equation}
        \|\lambda(A)\|_1-\|\lambda(B)\|_1=\|\lambda(A-B)\|_1=\|A-B\|_1
    \end{equation}
    if and only if there is $0\leq E_0\leq P_0$ such that 
    $$\operatorname{Tr}(P_+A)+\operatorname{Tr}(E_0A)\geq\operatorname{Tr}(P_+B)+\operatorname{Tr}(E_0B)\ge0.$$ 
\end{theorem}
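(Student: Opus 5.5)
The plan is to prove sufficiency by an explicit measure-and-prepare construction, and necessity by pulling the positive projector of $\lambda(\Delta)$ back through the adjoint map $\lambda^*$. The one structural fact needed throughout is that for any $0\le E_0\le P_0$ we have $\operatorname{Tr}(E_0\Delta)=0$, because $E_0$ is supported in $\ker\Delta$.

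\emph{Sufficiency.} Suppose $0\le E_0\le P_0$ satisfies the inequality in the statement, and set $F:=P_++E_0$, so that $0\le F\le I$. Define the two-outcome measure-and-prepare channel
\[
\lambda(X)=\operatorname{Tr}(FX)\,|0\rangle\langle 0|+\operatorname{Tr}\big((I-F)X\big)\,|1\rangle\langle 1| ,
\]
and embed the output in $\mathcal B(\cH_d)$ for any $d\ge2$. This map is CPTP. Since $A$ and $B$ are traceless, $\lambda(A)=a(|0\rangle\langle0|-|1\rangle\langle1|)$ with $a=\operatorname{Tr}(FA)$, and likewise $\lambda(B)$ with $b=\operatorname{Tr}(FB)$. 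Using $\operatorname{Tr}(E_0\Delta)=0$, we get $a-b=\operatorname{Tr}(P_+\Delta)=\operatorname{Tr}\Delta_+=\tfrac12\|\Delta\|_1$. Hence $\|\lambda(\Delta)\|_1=2(a-b)=\|\Delta\|_1$. The hypothesis gives $a\ge b\ge0$, so $\|\lambda(A)\|_1-\|\lambda(B)\|_1=2|a|-2|b|=2(a-b)$, and the chain of equalities \eqref{eq:saturation_condition} holds.

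\emph{Necessity.} Suppose $\lambda$ achieves the saturation. Let $\Pi_+$ be the projector onto the support of $Q_+$, the positive part of $\lambda(\Delta)$, and set $F:=\lambda^*(\Pi_+)$. Because $\lambda$ is CPTP, $\lambda^*$ is unital and positive, so $0\le F\le I$. Saturation of contractivity gives
\[
\operatorname{Tr}(F\Delta)=\operatorname{Tr}(\Pi_+\lambda(\Delta))=\operatorname{Tr}Q_+=\tfrac12\|\Delta\|_1=\operatorname{Tr}\Delta_+ .
\]
The key step, and the expected main obstacle, is the equality case of $\operatorname{Tr}(F\Delta)\le\operatorname{Tr}\Delta_+$ for $0\le F\le I$. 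Writing $\operatorname{Tr}(F\Delta)=\operatorname{Tr}(F\Delta_+)-\operatorname{Tr}(F\Delta_-)$, equality forces $\operatorname{Tr}((I-F)\Delta_+)=0$ and $\operatorname{Tr}(F\Delta_-)=0$. Since $I-F\ge0$ and $F\ge0$, this gives $(I-F)P_+=0$ and $FP_-=0$, using the standard fact that $\operatorname{Tr}(XY)=0$ with $X,Y\ge0$ implies $XY=0$. Then $F$ is block diagonal with respect to $P_+\oplus P_-\oplus P_0$, so $F=P_++E_0$ with $E_0:=P_0FP_0$ and $0\le E_0\le P_0$.

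It remains to derive the inequality. By Corollary~\ref{corollary cond 1}, $\lambda(A)$ and $\lambda(B)$ commute, and on each common eigenvector the eigenvalues satisfy $(v_a-v_b)v_b\ge0$. On common eigenvectors spanning the range of $\Pi_+$ we have $v_a-v_b>0$, hence $v_b\ge0$. Therefore $\operatorname{Tr}(FB)=\operatorname{Tr}(\Pi_+\lambda(B))\ge0$. Finally, $\operatorname{Tr}(FA)-\operatorname{Tr}(FB)=\operatorname{Tr}(F\Delta)=\operatorname{Tr}\Delta_+\ge0$, which yields
\[
\operatorname{Tr}(P_+A)+\operatorname{Tr}(E_0A)\ge\operatorname{Tr}(P_+B)+\operatorname{Tr}(E_0B)\ge0 .
\]
The only care needed here is to choose the common eigenbasis adapted to $\Pi_+$. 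This is possible because $\lambda(\Delta)$ is a function of the commuting pair $\lambda(A),\lambda(B)$.
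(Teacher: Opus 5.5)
There is a genuine gap in your last step; the rest of the argument is sound.

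\textbf{The gap.} You invoke Corollary~\ref{corollary cond 1} to conclude that $\lambda(A)$ and $\lambda(B)$ commute. The commutation claim in that corollary, and in Theorem~\ref{theorem triangular} behind it, is false. Equality in the triangle inequality only forces the positive parts to be supported orthogonally to the negative parts; it does not force commutation.

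Here is a counterexample. Take $m=d=3$, $\lambda=\mathrm{id}$, $|+\rangle=(|0\rangle+|1\rangle)/\sqrt2$, $B=|+\rangle\langle+|-|2\rangle\langle2|$ and $\Delta=|0\rangle\langle0|-|2\rangle\langle2|$. Then $A=|0\rangle\langle0|+|+\rangle\langle+|-2|2\rangle\langle2|$. All three operators are traceless, and $\|A\|_1-\|B\|_1=4-2=2=\|\lambda(\Delta)\|_1=\|\Delta\|_1$. Yet $[A,B]=[|0\rangle\langle0|,|+\rangle\langle+|]\neq0$. The paper's proof of Theorem~\ref{theorem triangular} fails where it infers simultaneous diagonalizability from compatibility with the eigenspaces of the common norming operator; those eigenspaces can be degenerate.

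So a common eigenbasis adapted to $\Pi_+$ need not exist, and your argument does not establish $\operatorname{Tr}(\Pi_+\lambda(B))\ge0$. The paper's own necessity proof does not avoid this point either. It compresses to a qubit with $\mathcal D$, so the corollary is only applied to diagonal operators. But it then simply asserts that $\mathcal D\circ\lambda$ still saturates the reverse triangle inequality. Since $a-b=\tfrac12\|\Delta\|_1$, that assertion is equivalent to $b=\operatorname{Tr}(\Pi_+\lambda(B))\ge0$, which is exactly the inequality at issue.

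\textbf{The fix.} Use the same equality-case argument you already used for $F$.
\begin{itemize}
\item Set $X=\lambda(\Delta)=Q_+-Q_-$, $Y=\lambda(B)=Y_+-Y_-$ and $S=\operatorname{sign}(\lambda(A))$, so that $-I\le S\le I$.
\item Then $\|\lambda(A)\|_1=\operatorname{Tr}(SX)+\operatorname{Tr}(SY)\le\|X\|_1+\|Y\|_1$. Saturation forces $\operatorname{Tr}(SX)=\|X\|_1$ and $\operatorname{Tr}(SY)=\|Y\|_1$.
\item Exactly as in your analysis of $F$, the first equality gives $\operatorname{Tr}((I-S)Q_+)=0$, hence $S\Pi_+=\Pi_+$.
\item The second gives $\operatorname{Tr}((I+S)Y_-)=0$, so $S$ acts as $-I$ on the support of $Y_-$.
\item Since $S$ is Hermitian, these two eigenspaces are orthogonal. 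Hence $\Pi_+Y_-=0$ and $\operatorname{Tr}(FB)=\operatorname{Tr}(\Pi_+Y)=\operatorname{Tr}(\Pi_+Y_+)\ge0$.
\end{itemize}

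With this replacement your proof is complete. Your sufficiency construction coincides with the paper's. Your identification $F=\lambda^*(\Pi_+)=P_++E_0$, via the equality case of $\operatorname{Tr}(F\Delta)\le\operatorname{Tr}\Delta_+$, is a cleaner and fully rigorous substitute for the paper's appeal to Lemma~\ref{lemma cond 2}.
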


\begin{proof}
    Let $\Delta = \Delta_+ - \Delta_-$ be the spectral decomposition of $\Delta$. The projector onto the support of $\Delta_+$ is $P_+$. 

    \textbf{Sufficiency ($\impliedby$):} 
    Assume there is $0\leq E_0\leq P_0$ such that 
    $$\operatorname{Tr}( (P_++E_0) A) \geq \operatorname{Tr}( (P_++E_0) B)\ge 0.$$ We explicitly construct a maximal coarse-graining map $\lambda_{cg}: \mathcal{B}(H_m) \to \mathcal{B}(H_d)$, defined as a measurement onto the orthogonal basis elements $\{|0\rangle, |1\rangle\}$ of $H_d$ (on the basis $\{|k\rangle\}_{k=0}^{d-1}$ for $H_d$):
    \begin{equation}
        \lambda_{cg}(X)=\text{Tr}((P_++E_0)X)|0\rangle\langle0|+\text{Tr}((I-P_+-E_0)X)|1\rangle\langle1|.
    \end{equation}
    Because $0 \le E_0 \le P_0$, the operators $E_0$ and $P_0 - E_0$ are both positive. Consequently, $\{P_+ + E_0, I - P_+ - E_0\}$ constitutes a valid Positive Operator-Valued Measure (POVM) that sums to the identity. Therefore, $\lambda_{cg}$ is a well-defined measurement channel and is strictly CPTP.

    Since $E_0$ is supported on the Kernel, $E_0\Delta=0$. Then, applying this map to $\Delta$, we obtain:
    \begin{equation}
        \lambda_{cg}(\Delta)=\text{Tr}(\Delta_+)|0\rangle\langle0|-\text{Tr}(\Delta_-)|1\rangle\langle1|=\|\Delta_+\|_1|0\rangle\langle0|-\|\Delta_-\|_1|1\rangle\langle1|.
    \end{equation}
    Because $|0\rangle\langle0|$ and $|1\rangle\langle1|$ are orthogonal, we have $\|\lambda_{cg}(\Delta)\|_1 = \|\Delta_+\|_1 + \|\Delta_-\|_1 = \|\Delta\|_1$. Thus, the second equality of Eq.~\eqref{eq:saturation_condition} is satisfied.

    Next, we apply $\lambda_{cg}$ to $B$ and to $A = \Delta + B$, which by linearity is $\lambda_{cg}(A) = \lambda_{cg}(\Delta) + \lambda_{cg}(B)$. The outputs are diagonal in the $\{|k\rangle\}$ basis, so their trace norms are simply the sum of the absolute values of their coefficients:
    \begin{align}
        \|\lambda_{cg}(B)\|_1 &= |\text{Tr}((P_++E_0)B)|+|\text{Tr}((I-P_+-E_0)B)|, \\
        \|\lambda_{cg}(A)\|_1 &= \big|\|\Delta_+\|_1+\text{Tr}((P_++E_0)B)\big|+\big|-\|\Delta_-\|_1+\text{Tr}((I-P_+-E_0)B)\big|.
    \end{align}
    By our assumption, $\text{Tr}((P_++E_0)B)\ge0$ and so $\text{Tr}((I-P_+-E_0)B)=-\text{Tr}((P_++E_0)B)\le0$, due to $B$ being traceless. Since the norms are non-negative ($\|\Delta_+\|_1, \|\Delta_-\|_1 \ge 0$), we can drop the absolute values appropriately:
    \begin{align}
        \|\lambda_{cg}(B)\|_1 &= \text{Tr}((P_++E_0)B)-\text{Tr}((I-P_+-E_0)B), \\
        \|\lambda_{cg}(A)\|_1 &= \|\Delta_+\|_1+\text{Tr}((P_++E_0)B)+\|\Delta_-\|_1-\text{Tr}((I-P_+-E_0)B).
    \end{align}
    Subtracting the two yields:
    \begin{equation}
        \|\lambda_{cg}(A)\|_1-\|\lambda_{cg}(B)\|_1=\|\Delta_+\|_1+\|\Delta_-\|_1=\|\Delta\|_1.
    \end{equation}

    Notice that $\operatorname{Tr}((P_++E_0)A)\geq \text{Tr}((P_++E_0)B)$ is required for consistency, since $$\text{Tr}((P_++E_0)A)=||\Delta_+||_1 +\text{Tr}((P_++E_0)B)$$

    \vspace{0.5cm}
    \textbf{Necessity ($\implies$):}
    Assume there exists an arbitrary CPTP map $\lambda$ satisfying Eq.~\eqref{eq:saturation_condition}. From Lemma \ref{lemma cond 2}, for $\|\lambda(\Delta)\|_1 = \|\Delta\|_1$ to hold, $\lambda(\Delta_+)$ and $\lambda(\Delta_-)$ must have strictly orthogonal supports in the output space $H_d$. Let $\Pi_+$ and $\Pi_-$ be the projectors onto these respective supports, such that $\Pi_+ \Pi_- = 0$.

    Consider the CPTP-map $\mathcal{D}(Y) = \text{Tr}(\Pi_+ Y)|0\rangle\langle0| + \text{Tr}((I-\Pi_+ )Y)|1\rangle\langle1|$ from $\mathcal B(H_d) \rightarrow \mathcal B(H_2)$. 
    Analogously as before, this construction guarantees that $||\mathcal{D}(\lambda(\Delta))||_1=||\lambda(\Delta)||_1$
    
    Consequently, the composed map $\mathcal{E} = \mathcal{D} \circ \lambda$ is a CPTP map into a 2-level system that \textit{still} saturates the bound. By definition of $\mathcal{E}$, its outputs commute. Moreover, since $\mathcal{E}(A)$ and $\mathcal{E}(B)$ are traceless, they have eigenvalues $\{a,-a\}$ and $\{b,-b\}$.
    
    To satisfy the reverse triangle inequality $\|\mathcal{E}(A)\|_1 - \|\mathcal{E}(B)\|_1 = \|\mathcal{E}(A-B)\|_1$,  Corollary \ref{corollary cond 1} implies that $(a-b)b\geq0$. 
    However, we have that
    \begin{equation*}
    \begin{split}
        &a = \mathcal{E}(A)_0= \text{Tr}\big(\Pi_+\lambda(A)\big)= \text{Tr}\big(\Pi_+(\lambda(\Delta)+\lambda(B))\big)=||\Delta_+||_1 +\text{Tr}(\Pi_+\lambda(B))\\
        &b = \mathcal{E}(B)_0= \text{Tr}(\Pi_+\lambda(B)),
    \end{split}
    \end{equation*}
    so $a\geq b$, and then $b \geq 0$.

    Now, using the dual map $\lambda^\dagger$ to define $E=\lambda^\dagger(\Pi_+)$, we have that $\text{Tr}(\Pi_+\lambda(X))=\text{Tr}(EX)$ for any operator $X$. Thus
    \begin{equation}
        \text{Tr}(EP_{\pm})=\text{Tr}(\Pi_+\lambda(P_{\pm}))
        \label{eq: Tr}
    \end{equation}
    and since $\lambda$ is CPTP, $0\leq \Pi_+\leq I \implies 0\leq E \leq I$.
    
    From Lemma \ref{lemma cond 2}, $\lambda$ perfectly preserves the distinguishability of $P_+$ and $P_-$, no trace from the $P_+$ subspace can leak into $\Pi_-$, and vice versa. So $\lambda(P_+)$ lives in the support of $\Pi_+$, then \eqref{eq: Tr} becomes $\text{Tr}(EP_{+})=\text{Tr}(P_{+})$ and $\text{Tr}(EP_{-})=0$. So $E$ must act as the identity on the support of $P_+$ and be orthogonal to $P_-$. Therefore, there is $E_0$ with support on the Kernel of $\Delta$, i.e. $0\leq E_0\leq P_0$, such that $E=P_+ + E_0$

    Which is precisely the condition we desired once
    \begin{equation*}
        b = \text{Tr}(\Pi_+\lambda(B))=\text{Tr}(EB)=\text{Tr}((P_+ + E_0)B) \geq0
    \end{equation*}
    and
    \begin{equation*}
        a = \text{Tr}(\Pi_+\lambda(A))=\text{Tr}(EA)=\text{Tr}((P_+ + E_0)A) \geq b.
    \end{equation*}
       
\end{proof}

\section{Optimization results for bound saturation example}
\label{app: data trivial case}

\begin{figure*}[h]
    \includegraphics[width=0.75\linewidth]{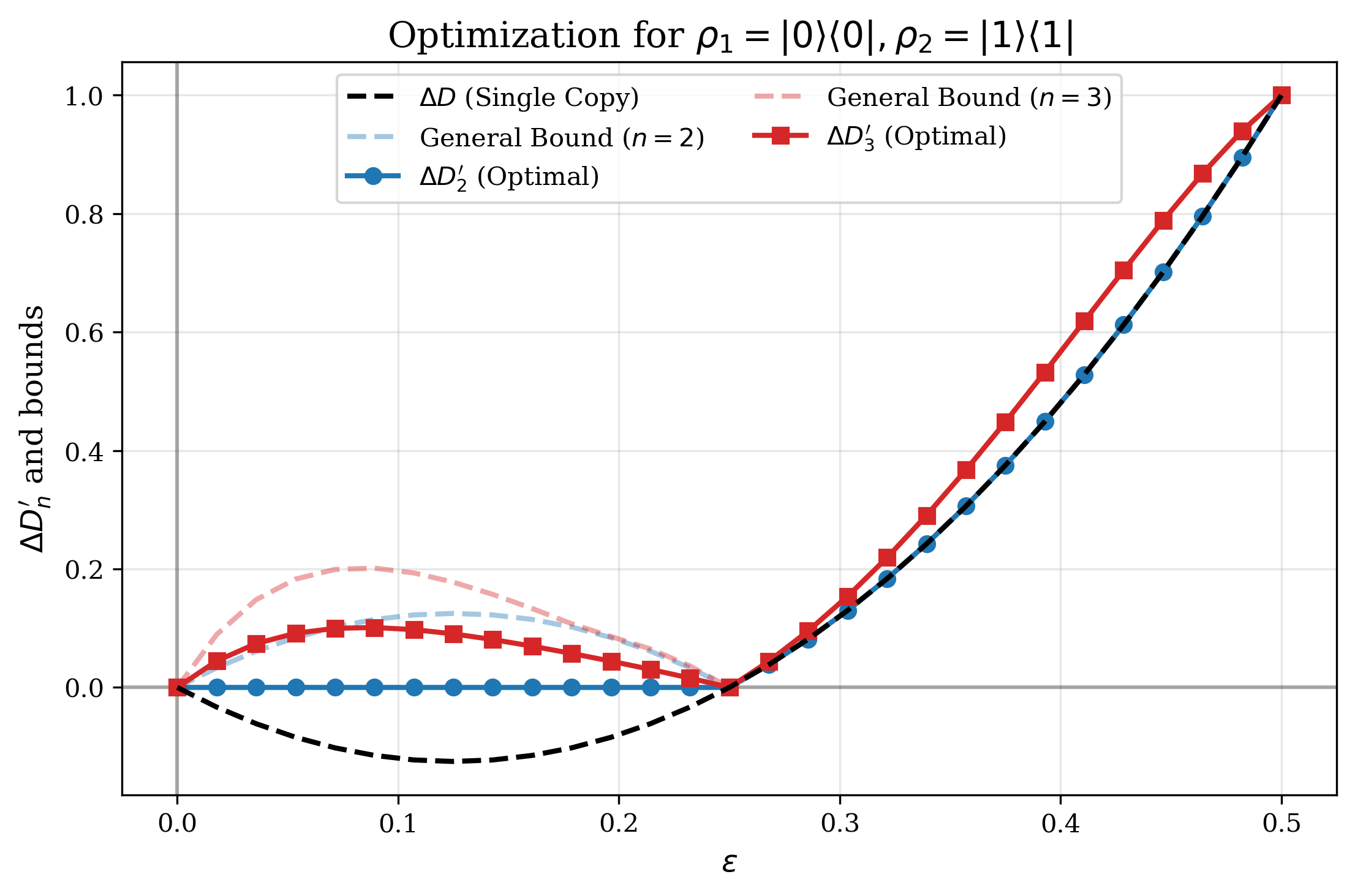}
    \caption{Optimization results for the example introduced in Section~\ref{sec:distillation}. For $\varepsilon > 0.25$, the general bound is attained across the entire region, as expected, for both values of $n$ considered. In contrast, for $\varepsilon < 0.25$, the regime change is observed only for $n=3$, highlighting the multi-copy advantage in activating the regime transition.}
    \label{fig: trivial case}
\end{figure*}

\end{document}